\documentclass[11pt]{article}
\usepackage[T1]{fontenc}
\usepackage{lmodern}
\usepackage{fullpage}
\usepackage{amsmath,amsthm,amssymb,color}
\usepackage{graphicx}
\usepackage{hyperref}
\hypersetup{colorlinks=true, citecolor=blue, linkcolor=red, urlcolor=blue, pdftitle={Optimal quantum adder}}

\newtheorem{theorem}{Theorem}
\newtheorem{lemma}[theorem]{Lemma}

\title{Ancilla-free optimal quantum adder}
\author{
Junhong Nie\thanks{Shandong University, Jinan 250100, China.}
\and
Jialiang Tang\thanks{State Key Laboratory of Processors, Institute of Computing Technology, Chinese Academy of Sciences, Beijing 100190, China; School of Computer Science and Technology, University of Chinese Academy of Sciences, Beijing 100049, China.}
\and
Xiaoming Sun\footnotemark[2]
\and
Wei Zi\thanks{Quantum Science Center of Guangdong-Hong Kong-Macao Greater Bay Area, Shenzhen, China. Corresponding author. Email: \nolinkurl{ziwei.quantum@outlook.com}.}
}

\date{}

\begin{document}
\maketitle

\begin{abstract}
We construct an exact $n$-bit quantum adder with $O(n)$ gates and $O(\log n)$ depth that requires no ancillary qubits. The circuit uses only $X$, CNOT, and Toffoli gates under all-to-all connectivity and is classically reversible. Our construction accelerates the controlled ladder underlying addition. Grouping the data into blocks and use existing control qubits as conditional workspace gives a recursive construction with $O(n\log\log n)$ gates and $O(\log n)$ depth. We then coordinate inter-block propagation via two short-path forest operations to cut the per-block workspace requirement, yielding a nonrecursive implementation with $O(n)$ gates. Both the gate count and circuit depth are asymptotically optimal.
\end{abstract}

\section{Introduction}
\label{sec:introduction}

An $n$-bit quantum adder implements
\begin{equation}
 \operatorname{ADD}_n:\quad
 (a,b)
 \longmapsto
 (a,a+b\bmod 2^n),
 \label{eq:adder}
\end{equation}
for $0\le a,b<2^n$, coherently on arbitrary superpositions. Quantum addition is a fundamental building block of quantum algorithms. It underlies modular arithmetic in algorithms for factoring and discrete logarithms~\cite{Shor1997,HaenerEtAl2020}, phase rotations in fault-tolerant quantum chemistry simulation~\cite{JonesEtAl2012}, and arithmetic oracles for block-encoding structured matrices~\cite{SuenderhaufEtAl2024}. These algorithms call addition many times, so the cost of each addition affects the gate count, depth, and ancillary space of the full algorithm. Reducing these costs together is therefore a basic problem in quantum circuit design.

Any exact $n$-bit adder over bounded-arity gates requires $\Omega(n)$ gates and $\Omega(\log n)$ depth, since the most significant output bit depends on all lower input bits. Our construction attains both bounds without ancillary qubits. The circuit uses only $X$, CNOT, and Toffoli gates and is therefore classically reversible.

\begin{theorem}
\label{thm:main}
With all-to-all connectivity, there is an exact $n$-bit quantum adder using $O(n)$ $X$, CNOT, and Toffoli gates and $O(\log n)$ depth, with no ancillary qubits.
\end{theorem}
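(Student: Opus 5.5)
We build on the ripple-carry adder of Cuccaro et al.\ and Takahashi et al.\ Their key observation is that an ancilla-free adder reduces, with $O(n)$ gates and $O(1)$ depth of CNOT layers, to a \emph{controlled ladder}. Concretely, after the preprocessing $b_i\gets a_i\oplus b_i$ and a CNOT ladder on the $a$ register, the remaining work is a sequence of Toffolis
\[
T_{n-1}\cdots T_2T_1,\qquad T_i:\ a_{i+1}\gets a_{i+1}\oplus (b_i\wedge a_i).
\]
Followed by its mirrored uncomputation, this computes every carry $c_{i+1}=\mathrm{MAJ}(a_i,b_i,c_i)$ in place in the $a$ wires. Here each $b_i$ acts as a control only.

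The plan is therefore to show that this controlled ladder $L_n$ admits an exact ancilla-free implementation with $O(n)$ Toffoli/CNOT/$X$ gates and $O(\log n)$ depth. Assembling the adder from it is then routine: constant-depth CNOT layers are wrapped around $L_n$ and $L_n^{-1}$, and the sums are written with CNOTs.

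\textbf{Step 1: the block recursion.} Split the $n$ positions into blocks of size $k$. Within block $j$, the ladder composes to a generate/propagate form. Let $P_j=\prod_{i\in j} b_i$ be the propagate bit and $G_j$ the in-block carry. The incoming carry then enters block $j$ as $c\mapsto G_j\oplus P_j c$. The obstacle to a log-depth prefix computation is the absence of ancillas to hold $P_j$ and $G_j$.

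The idea is to use the control wires of a block as \emph{conditional workspace}. The value $b_i$ is only ever used as a control. So we may temporarily XOR products into the $b$ wires of a block, for example $b_{i+1}\gets b_{i+1}\oplus b_i a_i$, as long as we restore them afterwards. Such a reversible scratch transformation can store $P_j$ and $G_j$ inside the block in depth $O(\log k)$. We then run a Brent--Kung-style parallel prefix over the $n/k$ block summaries, whose Toffolis act on the stored summary wires. Finally we uncompute the in-block scratch. This gives a recurrence of the form $S(n)=O(n)+\frac{n}{k}S(k)$ for gates, with depth $O(\log n)$ preserved when $k$ is chosen as a polynomial in $n$ such as $\sqrt n$. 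That choice yields $O(n\log\log n)$ gates and $O(\log n)$ depth.

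\textbf{Step 2: removing the $\log\log n$.} To reach $O(n)$, we stop recursing. Use blocks of size $k=\Theta(\log n)$, or any size for which an in-block $O(k)$-gate, $O(k)$-depth ladder is cheap enough. Process blocks in parallel with the naive linear ladder of depth $O(k)=O(\log n)$ and gates $O(k)$ per block.

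The only global work is the inter-block prefix over $m=n/k$ summaries. This needs just $O(m)=O(n/\log n)$ prefix Toffolis, so it can afford up to $O(\log n)$ workspace bits per block. That is exactly the budget each block's $b$ wires supply.

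The remaining difficulty is that one prefix tree needs several live intermediate values per block simultaneously. We handle this by organizing the inter-block propagation as two forest operations of short path length: an up-sweep computing group propagate/generate along a forest of depth $O(\log m)$, and a down-sweep distributing carries. Each block then needs only $O(1)$ workspace slots at any time, and these can be carved from its own $b$ wires conditionally.

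\textbf{Main obstacle.} The hard part is Step 2. We must verify that conditional workspace borrowed from control wires can be uncomputed exactly, with no garbage left on $b$. We must also verify that the two forest sweeps compose to precisely $L_n$, including the corner cases $P_j=0$. The check is a block-level invariant: after the up-sweep, each block's designated scratch wire holds its group summary XORed onto a known function of the untouched wires. This is what allows the down-sweep and the mirrored uncompute to cancel all scratch. Depth then adds as $O(k)+O(\log m)=O(\log n)$, and gates as $O(n)+O(m\log n\cdot 1)=O(n)$.
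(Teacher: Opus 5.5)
You reproduce the paper's architecture: reduction to a controlled ladder, blocks whose controls serve as conditional workspace, two short forests for inter-block propagation, and $\Theta(\log n)$ blocks to avoid recursion. But there is a genuine gap exactly where you locate the main obstacle. Nothing in the proposal shows how inter-block propagation can run with only dirty workspace, and the two mechanisms you invoke fail as stated.

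First, ``store $P_j$ inside the block'' is impossible literally. A reversible in-place transformation is a bijection on basis states, so it preserves the fraction $2^{-b}$ of states on which $P_j=1$. A nonzero function of $c$ current bits is one on at least a $2^{-c}$ fraction. Hence for $c<b$, no scratch transformation leaves $P_j$ readable from $c$ wires. A usable representation must combine values read before the transformation with values read after it. This is the paper's $\pi_i=m_if_i$ with $f_i=1+m_i+\chi_i$, extended to dyadic intervals as $\Pi(j,i)=m(j,i)f(j,i)$ with at most four original delegates. The early read must be parked in dirty borrowed bits, so every controlled step deposits a garbage term (initial dirty value)$\cdot$(source) into the data.

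Second, in a Brent--Kung up/down sweep the sources of later steps are the targets of earlier ones. That garbage therefore spreads through the data, and a mirrored uncompute of the scratch does not remove it. The obvious fix is to toggle and repeat each step immediately. That recomputes the group propagates at every level, giving $O(\log^2 n)$ depth, the regime of the earlier ancilla-free adders. Your invariant, ``summary XORed onto a known function of the untouched wires,'' does not help. The unknown part is the borrowed wire's initial value, which is not a function of untouched wires, and a Toffoli controlled by it puts that value's contribution into the data.

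The missing idea is the Pascal factorization with the four-addition echo. The inverse endpoint map factors as $J=Q_1Q_0$ with $Q_0^2=Q_1^2=I$; for the forests, each factor is taken along paths indexed by grades. Writing $Q=I+A$, one has $A^2=0$ and $ADA=A$. So $Q$ can be applied by the additions $\zeta\leftarrow\zeta+AZ$ and $Z\leftarrow Z+DA\zeta$, whose source is an unchanged register $Z$ of dirty interior bits, and $Z$ is restored. Only because the source is unchanged do two further devices work. One is the two-read cancellation $(a+M)FZ+aFZ=MFZ$ for the parked products. The other is the zero-test echo on the complemented $H_i$, which is legitimate because every term added to $\zeta_i$ contains $\pi_i$.

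In that scheme each walk needs $\Theta(h)$ clean conditional bits per block, one per stop, not $O(1)$; this is what forces $b=\Theta(\log n)$. Your Step 1 recursion depends on the same missing mechanism for its $O(n)$-size, $O(\log n)$-depth prefix over $\sqrt n$ summaries. A minor point: the CNOT prefix map in your reduction cannot have constant depth, since its last output depends on every input. $O(\log n)$ depth suffices, so this is harmless.
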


\subsection{Related works}
\label{sec:related}

Ripple-carry adders pass carry information from one bit to the next, giving linear size and depth with little workspace. Cuccaro et al.~\cite{CuccaroEtAl2004} used one ancillary qubit, and Takahashi, Tani, and Kunihiro~\cite{TakahashiTaniKunihiro2010} achieved both bounds without ancillary qubits. Draper's Fourier-based adder~\cite{Draper2000} also avoids ancillary qubits and uses controlled phase rotations.

Carry-lookahead adders compute carry information in parallel, using additional workspace to reduce depth. Draper et al.~\cite{DraperEtAl2006} gave an adder with $O(n)$ gates and $O(\log n)$ depth using $O(n)$ ancillary qubits. Takahashi, Tani, and Kunihiro~\cite{TakahashiTaniKunihiro2010} reduced the ancillary space to $O(n/\log n)$ with the same size and depth bounds.

To reduce depth without ancillary qubits, Remaud and Vandaele~\cite{RemaudVandaele2025} accelerated the CNOT and Toffoli ladders in ripple-carry addition. Their adder uses $O(n\log n)$ gates and $O(\log^2 n)$ depth over classical reversible gates. Remaud~\cite{Remaud2025Structures} further related ripple-carry and carry-lookahead adders through Toffoli ladders.

\subsection{Main techniques}
\label{sec:techniques}

Our starting point is a standard reduction that performs addition using two controlled ladders and CNOT circuits of linear size and logarithmic depth~\cite{RemaudVandaele2025,Remaud2025Structures}. Section~\ref{sec:preliminaries} recalls this reduction. It leaves us with the task of accelerating the controlled ladder, which maps the data $x$ to $y$ while preserving its controls $p$:
\begin{equation}
 \mathcal T_n:(p,x)\longmapsto(p,y),\qquad
 y_0=x_0,\quad y_i=x_i+p_i y_{i-1}\quad(1\le i<n),
 \label{eq:ladder}
\end{equation}
where $p=(p_1,\ldots,p_{n-1})$ is the control register and all bit arithmetic is over $\mathbb F_2$: addition is XOR and multiplication is AND. In the direct implementation, each update uses the preceding output, so the gates run one after another. We reduce the depth by computing these contributions in parallel.

We separate propagation within blocks from propagation between blocks. For clarity, assume $n=(r+1)b$ and partition the data into $r+1$ consecutive blocks $B_j=(x_{j,1},\ldots,x_{j,b})$, $0\le j\le r$. The first index identifies the block and the second the position within it; $p_{j,t}$ controls the transition into that position, except that $p_{0,1}$ is absent. We use three steps: local ladders send $x$ to $z$; endpoint updates replace $z_{j,b}$ by $y_{j,b}$; and interior corrections replace the remaining $z_{j,t}$ by $y_{j,t}$. The first step is a smaller ladder problem. The third simply undoes the local ladder on the interior, adds the incoming contribution to its first bit, and runs the interior ladder again; Figure~\ref{fig:interior-correction} shows the circuit. We therefore focus on the second step. A contribution from the preceding endpoint reaches the end of block $B_j$ only when all controls in that block are one. Their product is the propagation control $\pi_j$. Write $\zeta_j=z_{j,b}$ and $\eta_j=y_{j,b}$ within this procedure. The endpoints themselves follow a ladder:
\begin{equation}
\begin{aligned}
 \eta_0&=\zeta_0,\qquad
 \eta_j=\zeta_j+\pi_j\eta_{j-1},\\
 \pi_j&=\prod_{t=1}^{b}p_{j,t}\quad(1\le j\le r).
\end{aligned}
 \label{eq:endpoint-ladder}
\end{equation}

No extra qubits are available for the endpoint update, so we borrow existing interior bits and restore them afterwards. These bits may start with arbitrary values. We use repeated additions to cancel their initial values while keeping the desired update; we call this an echo. To make this cancellation possible, the Pascal decomposition splits the inverse endpoint map $J$, which sends $\eta$ back to $\zeta$, into two operations that square to the identity:
\begin{equation}
 J=Q_1Q_0,\qquad Q_0^2=Q_1^2=I.
 \label{eq:pascal-overview}
\end{equation}
Four additions between the endpoints and a borrowed register apply each factor and restore that register. Applying $Q_1$ and then $Q_0$ gives the required endpoint map.

To compute these additions, we use one control wire in each block as a dirty flag, meaning that its initial value is arbitrary. Write $m_i$ for the original value of $p_{i,b}$. Complement this bit, then add the conjunction $\chi_i=\prod_{t=1}^{b-1}p_{i,t}$ into it. The resulting value $f_i=1+m_i+\chi_i$ satisfies $\pi_i=m_if_i$. To update an endpoint, we walk through the earlier endpoints and add their controlled contributions. The controls along any part of this walk form an interval product. The identity $\pi_i=m_if_i$ lets us compute it from products of the original and encoded flags.

The remaining controls provide conditional space~\cite{NieZiSun2024,KhattarGidney2025}. Every term added to $\zeta_i$ contains $\pi_i$, so the update vanishes when $\chi_i=0$. When $\chi_i=1$, complementing these controls gives zero bits for the calculation. Echoes using zero tests apply the calculation only when $\chi_i=1$, and two reads cancel the initial values of the borrowed data. A walk with $r$ stops needs $O(r)$ such bits. Taking $r,b=\Theta(\sqrt n)$ gives a linear-size, logarithmic-depth endpoint update. For these larger blocks, an incrementer correction keeps the interior step logarithmic in depth. Running the local ladders recursively then gives the following result.

\begin{theorem}
\label{thm:recursive}
With all-to-all connectivity, the controlled ladder and quantum addition admit exact ancilla-free circuits of size $O(n\log\log n)$ and depth $O(\log n)$ over $\{X,\mathrm{CNOT},\mathrm{Toffoli}\}$.
\end{theorem}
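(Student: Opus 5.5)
We prove the statement for the controlled ladder $\mathcal T_n$. The claim for addition then follows from the reduction recalled in Section~\ref{sec:preliminaries}, which uses two controlled ladders plus CNOT circuits of size $O(n)$ and depth $O(\log n)$. For the ladder we set up a size/depth recursion $S(n),D(n)$. We split the data into $r+1$ blocks of length $b$ with $r,b=\Theta(\sqrt n)$, padding with blocks whose controls are fixed to zero, or absorbing the remainder into the last block, when $n$ is not of the form $(r+1)b$. The circuit is the three-phase scheme of Section~\ref{sec:techniques}. First, local ladders $x\mapsto z$ run in parallel on all blocks, each a recursive instance $\mathcal T_b$. Second, the endpoint update $\zeta\mapsto\eta$ following \eqref{eq:endpoint-ladder}. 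Third, interior corrections, which undo the interior ladder, inject $\pi$-weighted incoming contributions, and redo the interior ladder. This gives
\begin{equation}
S(n)\le 3(r+1)S(b)+O(n),\qquad D(n)\le 3D(b)+O(\log n).
\end{equation}
As written, this depth bound only gives $D(n)=O(\log n)^{\log_2 3}$, so the recursion must be arranged more carefully. The interior correction should not call the recursive ladder twice more. We would instead express it as a single controlled incrementer-style correction: adding $y_{j-1,b}$ into position $t$ of block $j$ with prefix control $\prod_{s\le t}p_{j,s}$. This is the ``incrementer correction'' mentioned in the overview. The prefix products are computed by a logarithmic-depth prefix-AND using the conditional-space trick, so each level costs $O(\log n)$ depth beyond a single recursive call. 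The recurrence then becomes $D(n)\le D(\sqrt n)+O(\log n)$, which gives $D(n)=O(\log n)$, since $\log n+\tfrac12\log n+\dots$ is geometric. Likewise $S(n)\le \sqrt n\,S(\sqrt n)+O(n)$ gives $S(n)/n\le S(\sqrt n)/\sqrt n+O(1)$, hence $S(n)=O(n\log\log n)$ over the $\log\log n$ levels.

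The key nonrecursive step is the endpoint update in $O(n)$ size and $O(\log n)$ depth, done in order as follows. (i) Encode the dirty flags by $p_{i,b}\mapsto f_i=1+m_i+\chi_i$, where $\chi_i$ is computed with a borrowed-bit AND tree inside block $i$ (depth $O(\log b)$, size $O(b)$). (ii) Use the Pascal decomposition $J=Q_1Q_0$ with $Q_0^2=Q_1^2=I$. Each $Q_k$ is implemented by four additions between $\zeta$ and a borrowed register $w$: add $Q_k$-image into $w$, add $w$ into the endpoints, repeat. The dirty contents of $w$ cancel, and what remains is the action of $Q_k$. (iii) Implement each addition as a walk over earlier endpoints whose interval products of $\pi$'s are obtained from $m$ and $f$ via $\pi_i=m_if_i$. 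These products are computed as prefix/suffix products by a balanced tree, giving $O(r)$ work per endpoint. This would be $O(r^2)=O(n)$ total, with depth $O(\log r)$. The $O(r)$ clean bits per block needed for this come from complementing the $b-1\ge r$ interior controls. These are all ones exactly when $\chi_i=1$, and the update to $\zeta_i$ vanishes otherwise. The computation is sandwiched in zero-test echoes, so the wrong-branch contributions cancel pairwise. (iv) Decode $f_i$ back to $m_i$.

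We expect step (iii) to be the main obstacle: checking that the echo with conditional workspace is exact on every branch. When $\chi_i=0$, the borrowed bits are garbage, and we must show that the two reads cancel that garbage while the computation, which is gated by $\pi_i$, contributes nothing. We would verify this by writing each echo as $U\,C\,U^{-1}\,C'$ over $\mathbb F_2$ and checking both branches of $\chi_i$ separately. A second point needing care is to arrange the depth so that the $O(r)$ per-endpoint walks run in parallel across endpoints without conflicting on shared controls. Fanning out the $m,f$ values is not allowed without ancillas, so we would schedule the walks as a pipelined prefix structure of depth $O(\log n)$.
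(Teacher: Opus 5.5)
Your proposal follows the paper's overall plan: the reduction, $\sqrt n$-blocks, the Pascal factors, the dirty flags $f_i$, conditional space, and the incrementer correction. However, two steps of the endpoint update fail as you describe them.

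First, the echo in step (ii) does not work. With $Q=I+A$, the sequence $w\leftarrow w+Q\zeta$, $\zeta\leftarrow\zeta+w$, repeated, ends with $\zeta=Q(\zeta+w)$ and $w=\zeta+Aw$ (using $Q^2=I$). So the dirty contents of $w$ leak into the endpoints and $w$ is not restored. This is not a matter of ordering. For additions $\zeta\leftarrow\zeta+Xw$, $w\leftarrow w+Y\zeta$, $\zeta\leftarrow\zeta+X'w$, $w\leftarrow w+Y'\zeta$ to give $(Q\zeta,w)$, you need $XY=A$ and $YX=0$ (and then $X'=QX$, $Y'=YQ$). This forces $A=0$ whenever $w$ is XORed into $\zeta$ unmodified, in either order. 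The missing idea is the paper's factorization $X=A$, $Y=DA$ with the backward shift $(Du)_i=u_{i+1}$. The echo $\zeta\leftarrow\zeta+AZ$, $Z\leftarrow Z+DA\zeta$, performed twice, applies $Q$ and restores $Z$ precisely because $A^2=0$ and $ADA=A$. The second identity is a genuine computation for the weighted Pascal matrices: it uses Pascal's rule, the parity argument of Lemma~\ref{lem:pascal}, and $\pi_{k+1}^2=\pi_{k+1}$ to merge the two weights into $\Pi(j,i)$. It is also what makes the $Z$-update another instance of the same walk computation.

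Second, your ordering (i)--(iv) destroys data that step (iii) needs. Step (i) overwrites the wire $p_{i,b}$ with $f_i$, so during the walks $m_\nu$ is on no wire and $\pi_\nu=m_\nu f_\nu$ cannot be formed. You also cannot park copies of $m_\nu$ in conditional space across the encoding, because $\mathcal E$ must read every $H_\nu$ in its original state. The paper handles this with a second cancellation:
\begin{itemize}
\item before encoding, a zero-test echo adds $\chi_iM(\cdot,i)$ into a dirty borrowed word $a_i$ of interior data, one bit per stop;
\item after encoding, a second echo adds $\chi_i\sum_j(\binom ij\bmod2)\,a_{i,i-j}F(j,i)Z_j$ to $\zeta_i$;
\item the sequence $\mathcal M,\mathcal E,\mathcal F,\mathcal E^{-1}$ is run twice. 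The two reads of $a_i$, with and without $M(\cdot,i)$ added, cancel its unknown initial value and leave $M(j,i)F(j,i)Z_j=\Pi(j,i)Z_j$. This uses linearity of the sum in each $a_{i,k}$.
\end{itemize}

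Three smaller points:
\begin{itemize}
\item Conditional space is not available for the interior correction. The term at position $t$ is live whenever $p_{j,1},\ldots,p_{j,t}$ are one, regardless of the later controls, so no fixed set of bits is guaranteed clean. The paper instead increments $(y_{j-1,b},p_{j,1},\ldots,p_{j,b})$ with the known one-dirty-bit incrementer, adding the old and new $p_{j,t+1}$ into $z_{j,t}$.
\item Fanning out shared $m$, $f$, $Z$ values into the conditional spaces is an ordinary CNOT fanout of linear size and logarithmic depth, so no pipelined schedule is needed.
\item Padding with zero-controlled blocks would add wires, so the remainder must be absorbed into an existing block.
\end{itemize}
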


The extra size factor comes from doing linear work at each of $O(\log\log n)$ recursion levels. Each block needs $O(r)$ conditional space because its walk can visit all $r$ earlier endpoints, so the blocks must remain large.

To remove this factor, we decompose $J$ into two weighted forest maps of height $O(\log r)$. Their edges connect endpoints, with propagation products as weights. Each block now follows a walk of only $O(\log r)$ stops, so $O(\log r)$ conditional space per block suffices and we can take $b=\Theta(\log n)$. The local ladders and interior corrections then both run directly in logarithmic depth, eliminating recursion. A forest path is a ladder on possibly nonconsecutive endpoints. Each edge crosses a dyadic interval, whose length is a power of two and whose starting point is a multiple of that length. Its control is the product over that interval. We extend the one-block encoding to represent each such interval by one encoded bit and a constant number of original controls. The same Pascal factors, borrowed-register additions, and conditional-space circuit then implement the endpoint update with $O(n+r\log r)$ gates and $O(\log n)$ depth. This gives Theorem~\ref{thm:main}.

\section{Preliminaries}
\label{sec:preliminaries}

\subsection{Notation and elementary circuits}

We specify each reversible circuit by its action on tuples of bits. Its action on arbitrary quantum states then follows by linearity. We use single indices for the overall ladder and pairs $(j,t)$ within blocks: $j\in\{0,\ldots,r\}$ identifies the block, and $t\in\{1,\ldots,b\}$ the position within it. Registers are ordered by concatenating their blocks, and arithmetic registers start at the least significant bit. All logarithms are base two. In bit expressions, $+$ means XOR and multiplication means AND, so the arithmetic is over $\mathbb F_2$. The indicator $[H=0]$ is one exactly when every bit of $H$ is zero. Gate sequences are listed in execution order. Matrix products act on column vectors from right to left.

We count gates from $\{X,\mathrm{CNOT},\mathrm{Toffoli}\}$ and assume all-to-all connectivity. Gates in one layer must have disjoint supports, so no wire is used by two gates in the same layer. Borrowed bits provide temporary workspace on existing input wires. A dirty borrowed bit may initially have any value and may be entangled with other registers. The circuit must restore it on arbitrary quantum states, including these entangled inputs.

We handle long control products and increments with the following known circuits~\cite[Lemmas~1--3 and Theorem~4]{Vandaele2026}. A $k$-controlled $X$ and a $k$-bit incrementer each have size $O(k)$ and depth $O(\log(k+1))$, using one borrowed dirty bit. CNOT ladders, CNOT fanout (adding one source bit to several targets), and the common-control updates $t_i\leftarrow t_i+c u_i$ on disjoint pairs $(u_i,t_i)$ have linear size and logarithmic depth without additional wires. A negative control tests for zero. We obtain it by applying $X$ to the control wire before and after the controlled gate. An inverse circuit is obtained by reversing its gates, with the same size and depth.

The simplest implementation of $\mathcal T_n$ in \eqref{eq:ladder} runs the Toffoli updates $x_i\leftarrow x_i+p_i x_{i-1}$ in increasing order of $i$. It acts on $n$ data bits and $n-1$ controls, using $n-1$ gates and depth $n-1$. We use this direct implementation when the data register is small enough.

\subsection{Reduction to a controlled ladder}

To build the adder, we use two controlled ladders together with linear-size, logarithmic-depth CNOT circuits. We recall the reduction used in ripple-carry addition~\cite{TakahashiTaniKunihiro2010,RemaudVandaele2025}.

\begin{lemma}[Ladder-to-adder reduction]
\label{lem:reduction}
An implementation of $\mathcal T_n$ on its $2n-1$ input wires with size $S(n)$ and depth $D(n)$ gives an $n$-bit adder on its two input registers with size $2S(n)+O(n)$ and depth $2D(n)+O(\log(n+1))$.
\end{lemma}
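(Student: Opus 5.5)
We follow the Takahashi--Tani--Kunihiro ripple-carry structure and replace its two Toffoli ladders by $\mathcal T_n$ (forward, then inverse). Write $a=(a_0,\dots,a_{n-1})$ and $b=(b_0,\dots,b_{n-1})$, and let $c_i$ denote the carry into position $i$. Then $c_0=0$ and
\[
c_{i+1}=a_ib_i+a_ic_i+b_ic_i=a_i+(a_i+b_i)(a_i+c_i).
\]
Put $g_i=a_i+c_i$ and $p_i=a_i+b_i$. This gives the ladder recurrence
\[
g_{i+1}=a_{i+1}+a_i+p_ig_i,\qquad g_0=a_0.
\]
This is exactly $\mathcal T_n$ applied to the data $x_0=a_0$, $x_i=a_i+a_{i-1}$, with controls $p_i$ stored in the $b$ register. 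The sum bits are $s_i=a_i+b_i+c_i=p_i+g_i+a_i$.

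The circuit runs as follows.
\begin{enumerate}
\item Apply CNOTs $b_i\leftarrow b_i+a_i$ for all $i$. This is depth one and puts $p_i$ in the $b$ wires.
\item Replace $a_i$ by $a_i+a_{i-1}$ for $i\ge1$. This is a CNOT ladder on the $a$ register, of linear size and logarithmic depth by the cited circuits.
\item Apply $\mathcal T_n$ with data in the $a$ wires and controls $p_1,\dots,p_{n-1}$ in the $b$ wires. The $a$ wires now hold $g_i=a_i+c_i$.
\item Add $g_i$ into $b_i$. The $b$ wires now hold $p_i+g_i=b_i+c_i$, which is not yet $s_i$.
\end{enumerate}
The data register still holds $g$, which depends on the carries, so it must be uncomputed. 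The carries are needed only to produce $s$.

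For the uncompute phase, note that $s_i=b_i+c_i$ in the original $b$. After step 4 the $b$ wires hold $s$, while the $a$ wires hold $g_i=a_i+c_i$. To restore $a$, we use the fact that the carries are also determined from $(a,s)$ by a ladder. Since $s=a+b+c$, we have $p_i=a_i+b_i=s_i+c_i$. Substituting this into $c_{i+1}=a_i+p_i(a_i+c_i)$ gives
\[
c_{i+1}=a_i+(s_i+c_i)(a_i+c_i).
\]
These controls themselves depend on $c$, so a better route is the standard trick of using the inverse ladder with controls $\bar p_i$ derived from $s$ and $a$ through a linear map. In TTK the second ladder is the inverse of the first, with the $b$ register temporarily holding $p$.

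Concretely, the order in TTK is: first compute $p$ into $b$, then run the forward ladder (Toffolis) to get $g$, then XOR $g$ into $b$ to get sums with a shifted correction. The inverse ladder then runs with controls given by the complemented sums, as in the second half of TTK. A cleaner equivalent is to note that $a+b \bmod 2^n=\overline{\,\bar a - b\,}$ type identities. The adder equals ``controlled ladder, CNOT layer, inverse controlled ladder'' conjugated by CNOT circuits and $X$ layers. The $X$ layers account for negating the controls and data, since subtraction and addition are related by complementation.

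I would verify the exact conjugation by checking that the composite map acts on basis states as $(a,b)\mapsto(a,a+b)$, following the derivation in \cite{RemaudVandaele2025}. Each non-ladder stage is a CNOT layer, a CNOT ladder or its inverse, an $X$ layer, or a common-control update. By the cited lemmas, each has size $O(n)$ and depth $O(\log(n+1))$ on existing wires. The two ladder calls contribute $2S(n)$ and $2D(n)$. Since there are constantly many stages, the totals are $2S(n)+O(n)$ and $2D(n)+O(\log(n+1))$.

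The main obstacle is pinning down the second-ladder step: identifying controls available on the wires at that moment so that the inverse $\mathcal T_n$ uncomputes the carry part of the data register. I would first try the TTK ordering with complemented sum bits as controls.
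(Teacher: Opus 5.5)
Your steps 1--4 are correct and coincide with the paper's first four stages. After step 4 the $a$ wires hold $g_i=a_i+c_i$ and the $b$ wires hold $p_i+g_i=b_i+c_i$. There is a minor index slip: transition $i$ of $\mathcal T_n$ is controlled by $p_{i-1}$, so the controls are $b$-wires $0,\dots,n-2$, not $p_1,\dots,p_{n-1}$.

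The genuine gap is the uncompute phase. You flag it as the main obstacle but never resolve it, and the surrounding claims are wrong: $s_i=a_i+b_i+c_i$, not $b_i+c_i$, and after step 4 the $b$ wires hold $b_i+c_i$, not $s_i$, as you yourself said a few lines earlier. Neither fallback works as a single ladder call.
\begin{itemize}
\item The sequential ripple-carry second half you recall interleaves the updates of $b$ with the uncomputing Toffolis, so that each control $b_{i-1}$ still holds $p_{i-1}$ when it is used. This cannot be rewritten as your step 4 followed by one $\mathcal T_n^{-1}$ with controls $p$, because step 4 has already overwritten $p$. Applying that inverse before step 4 merely undoes step 3 and loses the carries.
\item Complemented sum bits as controls also fail. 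You cannot form $s$ in $b$ at that point, since that requires $a$ and the $a$ wires hold $g$. Even with such controls, the inverse ladder would produce $a_i+a_{i-1}+a_{i-1}g_{i-1}$.
\item The complementation identity for subtraction does not supply this step.
\end{itemize}

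The missing idea, used in the paper's proof, is to let the controls depend on the carries. Use the current contents $b_i+c_i=p_i+g_i$ of the $b$ register, complemented.
\begin{enumerate}
\item Apply an $X$ layer to $b$.
\item Run $\mathcal T_n^{-1}$ (the reversed circuit, with the same size and depth) on the $a$ wires, with transition $i$ controlled by $b$-wire $i-1$. That wire now holds $1+p_{i-1}+g_{i-1}$. The inverse ladder computes $A_i=g_i+(1+p_{i-1}+g_{i-1})g_{i-1}$ from its input values. Since $g_{i-1}^2=g_{i-1}$ over $\mathbb F_2$, this equals $g_i+p_{i-1}g_{i-1}=a_i+a_{i-1}$ by your own recurrence, and $A_0=a_0$. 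This is exactly the state after your step 2.
\item Undo the $X$ layer.
\item Apply the prefix-XOR map to restore $a$.
\item Add $a$ into $b$ with CNOTs, turning $b_i+c_i$ into $s_i$.
\end{enumerate}
With these stages in place, your cost accounting goes through: two ladder calls plus constantly many $X$ layers, CNOT layers and CNOT ladders on the existing wires give $2S(n)+O(n)$ and $2D(n)+O(\log(n+1))$.
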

\begin{proof}
The first ladder computes the carry information in the first register. The remaining stages use it to form the sum in the second register and restore the first. Define the difference map by $(\Delta A)_0=A_0$ and $(\Delta A)_i=A_i+A_{i-1}$ for $i\ge1$. Its inverse is the XOR prefix map. Let $\mathcal T_B$ denote the ladder on $A$ whose transition $i$ is controlled by $B_{i-1}$, and write $\bar B_i=1+B_i$. Starting from $(A,B)=(a,b)$, run the following seven stages:
\begin{equation}
\begin{array}{rll}
1.&B\leftarrow B+A;&2.\quad A\leftarrow\Delta A;\\
3.&A\leftarrow\mathcal T_B A;&4.\quad B\leftarrow B+A;\\
5.&A\leftarrow\mathcal T_{\bar B}^{-1}A;&6.\quad A\leftarrow\Delta^{-1}A;\\
7.&B\leftarrow B+A.&
\end{array}
\label{eq:adder-reduction}
\end{equation}
To see how the ladder computes the carry information, let $\rho_i=a_i+b_i$, and let $c_i$ be the carry entering position $i$ in integer addition, with $c_0=0$. The usual carry recurrence $c_{i+1}=a_i b_i+\rho_i c_i$ implies that $u_i=a_i+c_i$ satisfies
\begin{equation}
 u_0=a_0,\qquad
 u_i=a_i+a_{i-1}+\rho_{i-1}u_{i-1}\quad(1\le i<n).
 \label{eq:carry-ladder}
\end{equation}
Thus stage 3 computes $A=u$, and stage 4 XORs $u$ into $B$, leaving $B=\rho+u$. To restore $A$, we next erase its carry information. Stage 5 uses the complemented controls in $B$. The identity $(1+\rho_i+u_i)u_i=\rho_i u_i$ over $\mathbb F_2$ gives
\[
 A_0=a_0,\qquad
 A_i=u_i+(1+\rho_{i-1}+u_{i-1})u_{i-1}
     =a_i+a_{i-1}\quad(i\ge1).
\]
Stage 6 now restores $A=a$, and stage 7 leaves $B=\rho+c$, the sum bits. This uses two ladder calls. The difference and prefix maps have linear size and logarithmic depth. Complementing the controls takes two parallel $X$ layers, and the remaining updates are parallel CNOT layers. All stages use the two input registers, with no additional wires.
\end{proof}

\subsection{Pascal matrices}

The endpoint update must restore all borrowed bits. We will do this with echoes built from operations that square to the identity. The Pascal matrices give two such operations whose product is the inverse ladder when all controls are one. We first prove this uncontrolled identity, then insert the propagation controls in Section~\ref{sec:recursive-endpoints}.

\begin{lemma}[Pascal factorization]
\label{lem:pascal}
For $m\ge1$, let $L$ be the $m\times m$ matrix with ones on its diagonal and immediately below it, and zeros elsewhere. This is the inverse ladder with all controls equal to one. Then, modulo $2$,
\begin{equation}
 L=P_1P_0,
 \label{eq:pascal-factorization}
\end{equation}
where $P_0$ and $P_1$ are lower-triangular matrices given by
\begin{equation}
\begin{aligned}
 (P_0)_{ij}&=\binom{i}{j}\bmod2
     &&(0\le j\le i<m),\\
 (P_1)_{ij}&=\binom{i-1}{j-1}\bmod2
     &&(1\le j\le i<m).
\end{aligned}
\label{eq:pascal-matrices}
\end{equation}
Set $(P_1)_{00}=1$ and $(P_1)_{i0}=0$ for $i>0$. Moreover, $P_0^2=P_1^2=I$.
\end{lemma}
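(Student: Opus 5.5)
Since $P_1 = 1 \oplus P_0^{(m-1)}$, where $P_0^{(m-1)}$ denotes the $(m-1)\times(m-1)$ version of $P_0$ acting on indices $1,\ldots,m-1$ shifted down by one, both claims reduce to facts about the binary Pascal matrix modulo~$2$. Our plan is to work with generating functions and Vandermonde-type identities.

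\textbf{Step 1: $P_0^2=I$.} We compute
\[
 (P_0^2)_{ik}=\sum_{j}\binom{i}{j}\binom{j}{k}=\binom{i}{k}\sum_{j}\binom{i-k}{j-k}=\binom{i}{k}2^{i-k}.
\]
Modulo~$2$ this vanishes unless $i=k$, where it equals~$1$. The block structure then gives $P_1^2=I$ immediately, since $P_1$ is a $1\times1$ identity block direct-summed with a smaller copy of $P_0$.

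\textbf{Step 2: $L=P_1P_0$.} Column $0$ is separate. Because $(P_1)_{i0}=[i=0]$ and $(P_1)_{ij}=0$ for $j=0<i$ except $i=0$, we get
\[
 (P_1P_0)_{i0}=\sum_{j\ge1}\binom{i-1}{j-1}\binom{j}{0}+[i=0].
\]
For $i\ge1$ this sum equals $2^{i-1}$, which is $1$ modulo~$2$ exactly when $i=1$. For $i=0$ it equals~$1$. This matches column $0$ of $L$.

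For $k\ge1$ and $i\ge1$, we use the identity $\binom{j}{k}=\binom{j-1}{k-1}+\binom{j-1}{k}$ and then apply Step~1's computation to the shifted matrix:
\[
 \sum_{j}\binom{i-1}{j-1}\binom{j}{k}
 =\sum_{j'}\binom{i-1}{j'}\left[\binom{j'}{k-1}+\binom{j'}{k}\right]
 =\binom{i-1}{k-1}2^{i-k}+\binom{i-1}{k}2^{i-1-k}.
\]
Modulo~$2$ the first term is $[i=k]$ and the second is $[i=k+1]$. Together they give ones on the diagonal and the subdiagonal, which is exactly~$L$. The remaining case $i=0$, $k\ge1$ is zero by lower-triangularity.

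\textbf{Main obstacle.} The only delicate point is the index bookkeeping: the boundary column $j=0$ of $P_1$, and the convention $\binom{n}{-1}=0$. The plan handles this by treating column $0$ separately and otherwise reducing everything to the single identity $\sum_j\binom{n}{j}\binom{j}{k}=\binom{n}{k}2^{n-k}$.
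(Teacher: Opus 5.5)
Your proof is correct and follows the paper's argument: both prove $P_0^2=I$ from $\sum_k\binom{i}{k}\binom{k}{j}=2^{i-j}\binom{i}{j}$ (with $P_1$ treated as the shifted copy), and both obtain $L=P_1P_0$ by using Pascal's rule to split $\binom{k+1}{j}$ into two such sums, giving $[i=j]+[i=j+1]$. The only difference is bookkeeping: you treat column $0$ separately, whereas the paper absorbs it through the convention that out-of-range binomials such as $\binom{k}{-1}$ vanish and checks row $0$ directly.
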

\begin{proof}
We first check that each factor squares to the identity. For $0\le j\le i<m$, compute the entries modulo $2$:
\[
\begin{aligned}
 (P_0^2)_{ij}
 &=\sum_{k=j}^{i}\binom{i}{k}\binom{k}{j}\\
 &=\binom{i}{j}\sum_{k=j}^{i}\binom{i-j}{k-j}
 =2^{i-j}\binom{i}{j}=[i=j].
\end{aligned}
\]
Thus $P_0^2=I$. The same calculation on coordinates $1,\ldots,m-1$ gives $P_1^2=I$. It remains to check that the product is $L$. For $i\ge1$, Pascal's identity gives
\[
\begin{aligned}
 (P_1P_0)_{ij}
 &\equiv\sum_{k=0}^{i-1}\binom{i-1}{k}\binom{k+1}{j}\\
 &=\sum_{k=0}^{i-1}\binom{i-1}{k}
   \left(\binom{k}{j}+\binom{k}{j-1}\right)\\
 &\equiv[i-1=j]+[i=j]\pmod2,
\end{aligned}
\]
where binomial coefficients outside their usual range are zero. Row zero is $(1,0,\ldots,0)$, so $P_1P_0=L$.
\end{proof}

\section{A block framework}
\label{sec:blocks}

We first run a ladder inside each block, ignoring the control from the preceding block. This computes the contributions from within each block. We then update the endpoints and use them to supply the missing contributions to the interiors. Figure~\ref{fig:block-framework} shows these three steps. Assume $n=(r+1)b$ and partition the data into $r+1$ consecutive blocks of size $b$:
\[
 B_j=(x_{j,1},\ldots,x_{j,b}),\qquad 0\le j\le r.
\]
For general $n$, take $r=\lceil n/b\rceil-1$ and adjust $B_0$ to $n-rb$ bits. The same construction and asymptotic bounds apply to this partition. Each column in the figure is a block. Its bottom wire $(j,b)$ is the endpoint; positions $1,\ldots,b-1$ form the interior. The colours track each data wire from its input $x_{j,t}$, through the local output $z_{j,t}$, to the final output $y_{j,t}$.

The control $p_{j,t}$ for $t\ge2$ connects positions $t-1$ and $t$ within $B_j$. For $j\ge1$, $p_{j,1}$ connects the preceding endpoint to the first position; $p_{0,1}$ is absent. Local ladders omit the incoming controls. The recursive construction uses $r=\Theta(\sqrt n)$, $b=\Theta(\sqrt n)$; the linear-size construction uses $r=\Theta(n/\log n)$, $b=\Theta(\log n)$.

\begin{figure}[!ht]
\centering
\includegraphics[scale=1.05]{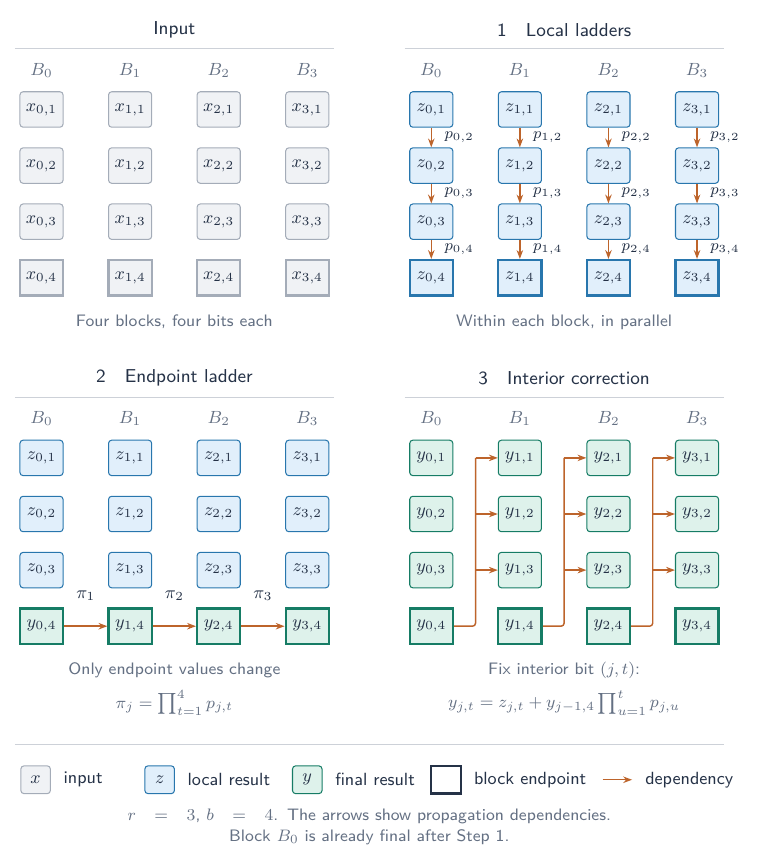}
\caption{The block framework with $r=3$ and $b=4$, so $n=16$. Each column represents one block, with its endpoint at the bottom. Arrows show propagation dependencies.}
\label{fig:block-framework}
\end{figure}

\begin{enumerate}
\item \textbf{Local ladders.} In panel 1, each column is processed independently: run $\mathcal T_b$ within $B_j$, using only $p_{j,2},\ldots,p_{j,b}$. The blue outputs are
\begin{equation}
 z_{j,1}=x_{j,1},\qquad
 z_{j,t}=x_{j,t}+p_{j,t}z_{j,t-1}
 \quad(2\le t\le b).
 \label{eq:block-local}
\end{equation}
The block circuits have disjoint supports and run in parallel. If a local ladder has size $S_{\mathrm{loc}}(b)$ and depth $D_{\mathrm{loc}}(b)$, the total size is $(r+1)S_{\mathrm{loc}}(b)$ and the depth is $D_{\mathrm{loc}}(b)$. We may use recursive local ladders or run their Toffoli gates directly in order. The direct implementation uses exactly $n-r-1$ Toffoli gates and has depth $b-1$.

\item \textbf{Endpoint update.} In panel 2, only the bottom entries change from blue to green. Write $\zeta_j=z_{j,b}$ and $\eta_j=y_{j,b}$ within this procedure. These endpoints form a ladder of their own:
\begin{equation}
 \eta_0=\zeta_0,\qquad
 \eta_j=\zeta_j+\pi_j\eta_{j-1},\qquad
 \pi_j=\prod_{t=1}^{b}p_{j,t}\quad(1\le j\le r).
 \label{eq:block-endpoints}
\end{equation}
Thus this step implements $(p,\zeta)\mapsto(p,\eta)$. The circuit evaluates each $\pi_j$ from the original block controls. It may temporarily modify those controls and borrow interior data, but it must restore both, leaving the interiors at $z_{j,t}$. Its size $S_{\mathrm{end}}$ and depth $D_{\mathrm{end}}$ include all control evaluation and restoration. The following sections give two implementations of this step.

\item \textbf{Interior correction.} In panel 3, each corrected endpoint supplies the missing contribution to the next block's interior. Block $B_0$ is already correct. For $j\ge1$, follow the three parts of Figure~\ref{fig:interior-correction}. First undo the local ladder on positions $1,\ldots,b-1$, recovering $x_{j,1},\ldots,x_{j,b-1}$. Next add $p_{j,1}y_{j-1,b}$ to the first bit with one Toffoli gate. Finally, run the interior ladder forward again. It carries this incoming contribution through the block, giving
\begin{equation}
 y_{j,t}=z_{j,t}
   +y_{j-1,b}\prod_{u=1}^{t}p_{j,u},
 \qquad 1\le t<b.
 \label{eq:block-interiors}
\end{equation}
\begin{figure}[!ht]
\centering
\includegraphics[scale=1.05]{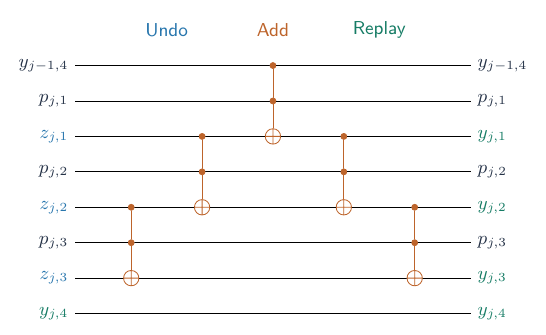}
\caption{Interior correction for $b=4$ and $j\ge1$. Undo the two interior gates, add the incoming contribution, and replay the two gates. Both endpoints and all controls are preserved.}
\label{fig:interior-correction}
\end{figure}

The circuit leaves both endpoints and all controls unchanged and needs no borrowed bits. Each block uses its incoming endpoint and leaves its outgoing endpoint to the next block, so all corrections run in parallel, giving total size $O(n)$ and depth $O(b)$.
\end{enumerate}

\begin{lemma}[Block reduction]
\label{lem:block-reduction}
For the partition above, suppose the local ladders have ancilla-free implementations of size $S_{\mathrm{loc}}(b)$ and depth $D_{\mathrm{loc}}(b)$. Suppose the endpoint update has size $S_{\mathrm{end}}$ and depth $D_{\mathrm{end}}$, using only existing wires and restoring the controls and all borrowed interior data. Then $\mathcal T_n$ has an ancilla-free implementation satisfying
\begin{equation}
\begin{aligned}
 S(n)&\le(r+1)S_{\mathrm{loc}}(b)
          +S_{\mathrm{end}}+O(n),\\
 D(n)&\le D_{\mathrm{loc}}(b)
          +D_{\mathrm{end}}+O(b).
\end{aligned}
\label{eq:block-costs}
\end{equation}
The incrementer correction in Section~\ref{sec:recursive-bounds} reduces the last depth term to $O(\log(b+1))$.
\end{lemma}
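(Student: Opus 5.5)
We compose the three steps of the block framework in sequence and check correctness and costs for each. Correctness is an induction along the ladder recurrence \eqref{eq:ladder}, rewritten in block indices as $y_{j,1}=x_{j,1}+p_{j,1}y_{j-1,b}$ and $y_{j,t}=x_{j,t}+p_{j,t}y_{j,t-1}$. The costs then simply add over the three steps.

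\textbf{Step 1 (local ladders).} Running $\mathcal T_b$ on each block with controls $p_{j,2},\dots,p_{j,b}$ produces $z$ as in \eqref{eq:block-local}. The blocks have disjoint supports, so they run in parallel, for size $(r+1)S_{\mathrm{loc}}(b)$ and depth $D_{\mathrm{loc}}(b)$. No ancillas are used.

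\textbf{Step 2 (endpoint update).} First we verify that the endpoint recurrence \eqref{eq:block-endpoints} really describes $y_{j,b}$. We claim that for $j\ge1$ and $1\le t\le b$,
\[
y_{j,t}=z_{j,t}+y_{j-1,b}\prod_{u=1}^{t}p_{j,u}.
\]
We prove this by induction on $t$. For $t=1$, it reads $y_{j,1}=x_{j,1}+p_{j,1}y_{j-1,b}$. For $t\ge2$, substitute the claim at $t-1$ into $y_{j,t}=x_{j,t}+p_{j,t}y_{j,t-1}$ and use $z_{j,t}=x_{j,t}+p_{j,t}z_{j,t-1}$. For block $B_0$ we have $y_{0,t}=z_{0,t}$, since $p_{0,1}$ is absent. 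Setting $t=b$ gives $\eta_j=\zeta_j+\pi_j\eta_{j-1}$, which is exactly the map the endpoint step is assumed to implement. By hypothesis, that step costs $S_{\mathrm{end}}$ and $D_{\mathrm{end}}$, restores all controls, and leaves the interiors at $z_{j,t}$.

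\textbf{Step 3 (interior correction).} For each $j\ge1$ the circuit does three things: it inverts the interior ladder on positions $1,\dots,b-1$, applies one Toffoli adding $p_{j,1}y_{j-1,b}$ to position $1$, and reruns the interior ladder. The result is the ladder applied to $x_{j,1}+p_{j,1}y_{j-1,b},x_{j,2},\dots$. Ladders are linear in the data for fixed controls, so the output is $z_{j,t}+y_{j-1,b}\prod_{u\le t}p_{j,u}$, which equals $y_{j,t}$ by the claim above; this is \eqref{eq:block-interiors}. The endpoint $y_{j,b}$ is only read, and $y_{j-1,b}$ is read only by block $j$. The two interior ladders touch positions $1,\dots,b-1$ of block $j$, and the Toffoli touches $p_{j,1}$, $y_{j-1,b}$, and $x_{j,1}$. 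Gates from different blocks therefore share no wires, except that endpoint $(j-1,b)$ is a control here while block $j-1$ only touches its own interior. So all blocks run in parallel. Using direct Toffoli ladders, this step has size $O(n)$ and depth $2(b-2)+1=O(b)$.

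Summing the three steps gives \eqref{eq:block-costs}, and no step introduces extra wires. I expect the main obstacle to be the parallel scheduling in Step 3, where one must confirm that no wire is shared between blocks within a layer. The argument above settles this: each block writes only to its own interior and reads a single foreign endpoint, which no other gate in that step writes. The final remark, that an incrementer correction reduces the $O(b)$ term to $O(\log(b+1))$, is deferred to Section~\ref{sec:recursive-bounds}.
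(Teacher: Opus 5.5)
Your proposal is correct and follows the paper's argument: the lemma is the composition of the three block-framework steps, with their costs added. Your induction for $y_{j,t}=z_{j,t}+y_{j-1,b}\prod_{u\le t}p_{j,u}$ and your wire-disjointness check for Step 3 make explicit what the paper states in \eqref{eq:block-endpoints}--\eqref{eq:block-interiors} and in its remark that each block reads only its incoming endpoint, and deferring the incrementer refinement to Section~\ref{sec:recursive-bounds} is also what the paper does.
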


\section{The recursive construction}
\label{sec:recursive}

The block framework leaves us with the endpoint update in Step 2 of Section~\ref{sec:blocks}. We first reduce this update to additions between the endpoint register and a borrowed register. We then compute these additions in parallel inside the blocks, restoring all controls and borrowed interior data.

\subsection{Reducing the endpoint ladder to additions}
\label{sec:recursive-endpoints}

Recall that the endpoints must satisfy $\eta_i=\zeta_i+\pi_i\eta_{i-1}$, with $\eta_0=\zeta_0$. We work with the inverse map
\begin{equation}
 J:\eta\longmapsto\zeta,\qquad
 \zeta_0=\eta_0,\quad
 \zeta_i=\eta_i+\pi_i\eta_{i-1}\quad(1\le i\le r),
 \label{eq:inverse-endpoints}
\end{equation}
because its Pascal decomposition will let us restore the borrowed bits. Reversing its circuit gives the required update. For the controls between two endpoints, write
\[
 \Pi(j,i)=\prod_{\nu=j+1}^{i}\pi_\nu
 \quad(0\le j\le i\le r),\qquad \Pi(i,i)=1.
\]
The circuit evaluates these products from the original controls.

We have only borrowed interior bits for this update. We therefore decompose $J$ into operations that square to the identity, so that an echo can apply each operation and restore the borrowed bits. The Pascal factorization gives this decomposition.

\paragraph{Pascal decomposition.} Weight each entry of the matrices in Lemma~\ref{lem:pascal} by the control product between its indices. For $P_0$, this gives
\begin{equation}
 (Q_0)_{ij}=\left(\binom{i}{j}\bmod2\right)\Pi(j,i)
 \quad(0\le j\le i\le r),
 \label{eq:weighted-pascal}
\end{equation}
with zeros above the diagonal. Define $Q_1$ in the same way using $P_1$. In a matrix product, the weights from $j$ to $k$ and from $k$ to $i$ multiply to $\Pi(j,i)$, independent of $k$. The identities of Lemma~\ref{lem:pascal} therefore give
\begin{equation}
 J=Q_1Q_0,\qquad Q_0^2=Q_1^2=I.
 \label{eq:weighted-pascal-identities}
\end{equation}
Applying $Q_1$ and then $Q_0$ implements $J^{-1}$.

\paragraph{Using a borrowed register.} We focus on $Q_0$; the construction for $Q_1$ is the same with both Pascal indices decreased by one and coordinate zero fixed. Write $Q_0=I+A$. Our goal is to apply $I+A$ through additions between $\zeta$ and the borrowed $Z$, restoring $Z$ afterwards. We already have $A^2=(Q_0+I)^2=0$. For the cancellation, we also use the fixed backward shift
\[
 (Du)_i=u_{i+1}\quad(0\le i<r),\qquad (Du)_r=0.
\]
Thus, in $DA$, the output of row $i$ is added at position $i-1$. This shift makes the echo work through the identity $ADA=A$. For $i>j$,
\begin{equation}
\begin{aligned}
 (ADA)_{ij}
 &\equiv\Pi(j,i)\sum_{k=j}^{i-1}
       \binom{i}{k}\binom{k+1}{j}\\
 &\equiv\Pi(j,i)\binom{i}{j}=A_{ij}\pmod2.
\end{aligned}
\label{eq:chain-echo-identities}
\end{equation}
The two weights cover the full interval from $j$ to $i$ and both contain $\pi_{k+1}$. Since $\pi_{k+1}^2=\pi_{k+1}$, their product is $\Pi(j,i)$. For the second equality, add the terms $k=i$ and, when $j>0$, $k=j-1$. The enlarged sum is zero by Pascal's rule and the calculation in Lemma~\ref{lem:pascal}. Removing these terms leaves $\binom{i+1}{j}+\binom{i}{j-1}\equiv\binom{i}{j}\pmod2$. All other entries vanish. Decreasing the Pascal indices gives the same two identities for $A=Q_1-I$.

These identities give a circuit using $r+1$ borrowed interior bits for $Z$, disjoint from the endpoints:
\begin{equation}
\begin{aligned}
 (\zeta,Z)
 &\xrightarrow{\zeta\leftarrow\zeta+AZ}
       (\zeta+AZ,Z)\\
 &\xrightarrow{Z\leftarrow Z+DA\zeta}
       (\zeta+AZ,Z+DA\zeta)\\
 &\xrightarrow{\zeta\leftarrow\zeta+AZ}
       (\zeta+A\zeta,Z+DA\zeta)\\
 &\xrightarrow{Z\leftarrow Z+DA\zeta}
       ((I+A)\zeta,Z).
\end{aligned}
\label{eq:register-echo}
\end{equation}
The state expressions use the initial values of $\zeta,Z$. In the second addition, $A^2=0$ removes the dependence on the first change to $\zeta$. In the third, the two copies of $AZ$ cancel and $ADA=A$ leaves the desired $A\zeta$. The last addition restores $Z$ because $DA^2=0$. Thus four additions apply the factor and restore the borrowed register, whatever its initial state.

We have reduced the endpoint update to eight additions, four per factor. For $A=Q_0-I$, the first addition is
\begin{equation}
 \zeta_i\longmapsto\zeta_i+
 \sum_{j=0}^{i-1}\left(\binom{i}{j}\bmod2\right)\Pi(j,i)Z_j
 \quad(1\le i\le r),
 \label{eq:chain-addition}
\end{equation}
with coordinate zero unchanged. It is enough to construct this addition while preserving $Z$ and restoring the controls and temporary data. The other additions exchange the registers or redirect row $i$ to position $i-1$.

\subsection{Parallel walks along the chain}
\label{sec:chain-walks}

To compute the control products in \eqref{eq:chain-addition}, we separate each block's last control from its remaining controls. Write
\begin{equation}
 H_i=(p_{i,1},\ldots,p_{i,b-1}),\qquad
 \chi_i=\prod_{t=1}^{b-1}p_{i,t},\qquad m_i=p_{i,b}.
 \label{eq:local-conditional-space}
\end{equation}
The block control is $\pi_i=m_i\chi_i$. We want to express this product using the last control's value before encoding and its value afterwards. Use that wire as a dirty flag and apply
\begin{equation}
 \mathcal E_i:\quad p_{i,b}\longleftarrow p_{i,b}+1+\chi_i.
 \label{eq:block-encoding}
\end{equation}
Call the new value $f_i$. If $m_i=1$, then $f_i=\chi_i$; if $m_i=0$, then $\pi_i=0$. Thus, on every input,
\begin{equation}
 f_i=1+m_i+\chi_i,\qquad \pi_i=m_if_i.
 \label{eq:block-encoded-flag}
\end{equation}
Figure~\ref{fig:block-encoding} implements this encoding with one $X$ and one multi-controlled $X$. The latter borrows and restores an interior bit $w_i$. All blocks run in parallel, so $\mathcal E=\prod_i\mathcal E_i$ has size $O(rb)$ and depth $O(\log(b+1))$. It preserves $H_i$ and is its own inverse.

\begin{figure}[!ht]
\centering
\includegraphics[scale=1.05]{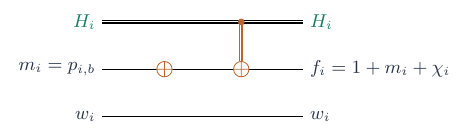}
\caption{Using $p_{i,b}$ as a dirty flag for $\chi_i$. The filled control on $H_i$ tests that all its bits are one. The multi-controlled $X$ borrows and restores $w_i$.}
\label{fig:block-encoding}
\end{figure}

To compute the sum for $\zeta_i$ in \eqref{eq:chain-addition}, follow the chain $i\to i-1\to\cdots\to0$. We call this a walk, with a stop at each earlier endpoint $j$. The contribution from $Z_j$ is controlled by
\begin{equation}
 \Pi(j,i)=M(j,i)F(j,i),\qquad
 M(j,i)=\prod_{\nu=j+1}^{i}m_\nu,\qquad
 F(j,i)=\prod_{\nu=j+1}^{i}f_\nu.
 \label{eq:chain-running-products}
\end{equation}
We therefore need to perform
\[
 \zeta_i\longmapsto\zeta_i+
 \sum_{j=0}^{i-1}\left(\binom{i}{j}\bmod2\right)
 M(j,i)F(j,i)Z_j.
\]
For example, the three Pascal coefficients before the diagonal in row $3$ are odd, so its update is
\[
 \zeta_3\longmapsto\zeta_3+
 M(2,3)F(2,3)Z_2+M(1,3)F(1,3)Z_1+M(0,3)F(0,3)Z_0.
\]
The same physical wires hold $m_\nu$ before $\mathcal E$ and $f_\nu$ afterwards. We therefore first compute $M$ and store it in borrowed data. This lets us keep $M$ while encoding the controls, then compute $F$ and combine the two products.

\paragraph{Computing $M$ before encoding.} To keep these products while $\mathcal E$ changes the controls, borrow a word $a_i=(a_{i,1},\ldots,a_{i,i})$ of interior data. Position $k=i-j$ corresponds to stop $j$. We will add $M(j,i)$ into $a_{i,k}$ and later cancel the initial value of this borrowed bit. For example, the three additions for the walk from $3$ are $m_3$, $m_3m_2$, and $m_3m_2m_1$.

We still need space to compute the products before adding them to $a_i$. Every term added to $\zeta_i$ contains $\pi_i=m_i\chi_i$, so the update is zero when $\chi_i=0$. When $\chi_i=1$, all bits of $H_i$ are one. Complementing them therefore gives $b-1$ zero bits for the calculation. We only need the calculation to take effect in this case. This is the conditional-space technique of Nie, Zi, and Sun and Khattar and Gidney~\cite{NieZiSun2024,KhattarGidney2025}. Figure~\ref{fig:chain-layout} locates $H_i$, the borrowed word $a_i$, and two borrowed bits $c_i,w_i$. The borrowed data are disjoint from $Z$, the endpoints, and the controls.

\begin{figure}[!ht]
\centering
\includegraphics[scale=1.05]{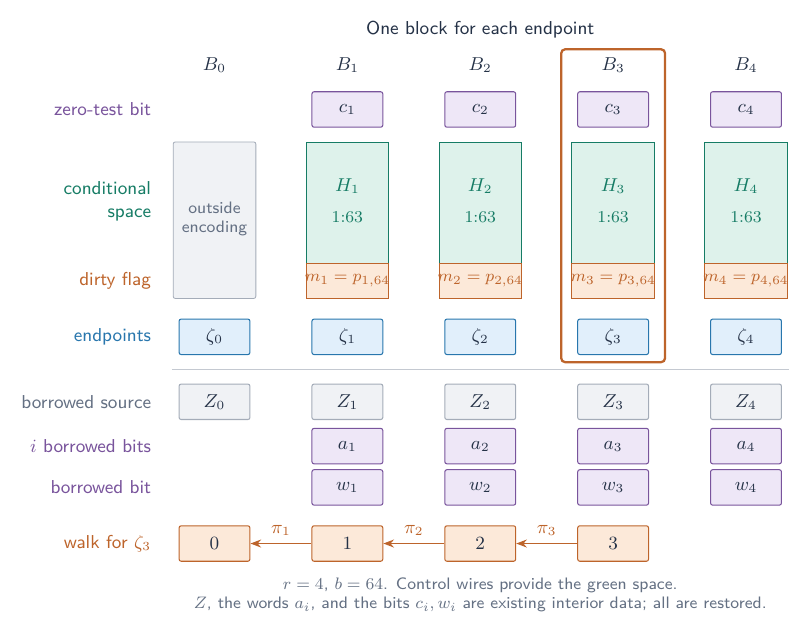}
\caption{The chain construction with $r=4$, $b=64$. The orange wire $p_{i,64}$ has original value $m_i$ and encoded value $f_i$. The green word $H_i$ contains positions $1,\ldots,63$. The borrowed bit $c_i$ above each block is used for zero tests. Below the endpoints are borrowed interior data: $Z_i$, the word $a_i$ of $i$ bits, and $w_i$. The highlighted walk for $\zeta_3$ visits $3,2,1,0$.}
\label{fig:chain-layout}
\end{figure}

When $H_i$ starts at zero, a circuit $U$ uses it to copy $m_i,m_{i-1},\ldots,m_1$ and compute all their running products, written $M(\cdot,i)=(M(i-1,i),\ldots,M(0,i))$. Figure~\ref{fig:prefix-scan} shows the calculation. First combine adjacent intervals up a balanced AND tree. Then work down the tree, passing to each interval the product of the inputs that precede it. At each leaf, multiply this preceding product by the leaf's own input. This gives every running product using $O(i)$ bits and gates and $O(\log(i+1))$ depth. We take $b\ge\kappa r$ for a sufficiently large fixed $\kappa\ge16$, so the copies, products, and sums below fit inside $H_i$, and the borrowed data fit inside the blocks. $U$ preserves all inputs outside $H_i$ and leaves its last position unused.

\begin{figure}[!ht]
\centering
\includegraphics[scale=1.05]{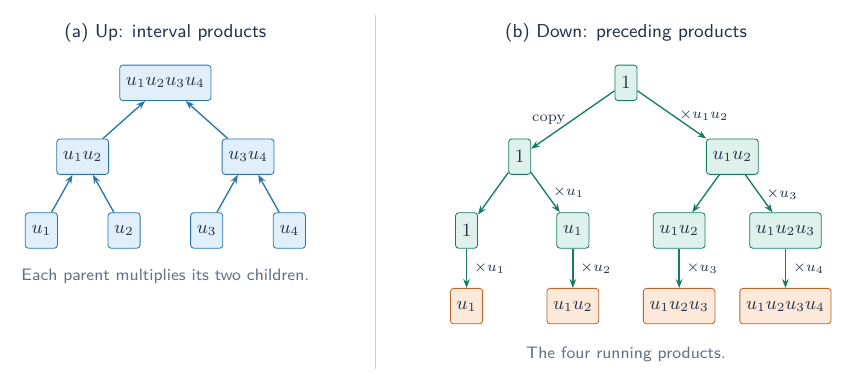}
\caption{Computing all running products of four copied inputs. The upward tree computes each interval's product. In the downward tree, a left child receives its parent's preceding product and a right child also multiplies the left interval's product. Multiplying at each leaf gives the four running products. All intermediate bits start at zero; the root preceding product is set to one by $X$. Each level takes a constant number of gate layers.}
\label{fig:prefix-scan}
\end{figure}

The calculation above assumes that $H_i$ starts at zero. The echo in Figure~\ref{fig:chain-m-echo} makes it take effect only when $\chi_i=1$, while restoring $H_i$ on every input. Complement $H_i$ and toggle $c_i$ by $[H_i=0]$. Run $U$, add $c_iM(\cdot,i)$ into $a_i$, and reverse $U$. Toggle $c_i$ by the same zero test and repeat these three operations, then undo the complements of $H_i$. If $\chi_i=1$, the zero test changes $c_i$ between the two calculations. Their contributions therefore add $M(\cdot,i)$ once, regardless of the initial value of $c_i$. If $\chi_i=0$, the zero test leaves $c_i$ unchanged. Both calculations then make the same addition, so they cancel even though $H_i$ began with nonzero bits. Thus the circuit performs
\[
 a_{i,k}\longmapsto a_{i,k}+\chi_iM(i-k,i)
 \quad(1\le k\le i),
\]
and restores $H_i,c_i,w_i$. Call the parallel application of these circuits to all blocks $\mathcal M$.

\begin{figure}[!ht]
\centering
\includegraphics[scale=1.05]{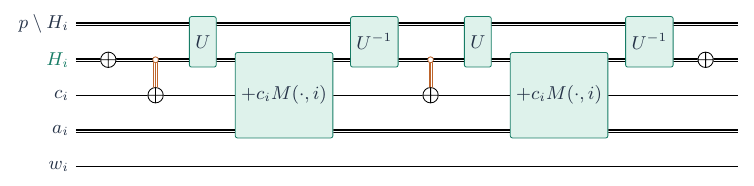}
\caption{Computing $M$ before encoding. The circuit for block $i$ adds $\chi_iM(\cdot,i)$ into the borrowed word $a_i$. $U$ computes the running products in $H_i$ when it starts at zero. An uncontrolled $\oplus$ complements its word, and each open control tests the entire word for zero. The zero tests borrow and restore $w_i$. All wires except $a_i$ return to their initial values.}
\label{fig:chain-m-echo}
\end{figure}

\paragraph{Computing $F$ after encoding.} The first stage has added $\chi_iM(\cdot,i)$ to the borrowed word $a_i$ and restored $H_i$. We can therefore apply $\mathcal E$ and read the encoded bits $f_\nu$. Compute their running products $F(j,i)$ with the same tree, then combine them with $a_i$ and $Z$ in the sum
\begin{equation}
 s_i=\sum_{j=0}^{i-1}
       \left(\binom{i}{j}\bmod2\right)a_{i,i-j}F(j,i)Z_j.
 \label{eq:chain-sum}
\end{equation}
When $H_i$ starts at zero, let $V$ copy the encoded bits, $a_i$, and $Z_{i-1},\ldots,Z_0$ into it, form the products, and compute this sum into its last position. Its products and sum take $O(i)$ gates and $O(\log(i+1))$ depth. It preserves every bit outside $H_i$.

The encoding leaves $H_i$ unchanged. The same echo can therefore still test whether its original bits were all one, the case $\chi_i=1$. Figure~\ref{fig:chain-f-echo} replaces $U$ by $V$ and each word addition by a Toffoli gate from $c_i,s_i$ to $\zeta_i$. When $\chi_i=1$, the two additions give $s_i$; when $\chi_i=0$, they cancel. Denote the parallel application of these circuits by $\mathcal F$. It performs $\zeta_i\mapsto\zeta_i+\chi_i s_i$, preserving $a_i,Z$ and restoring $H_i,c_i,w_i$. We can then apply $\mathcal E^{-1}$ to return the controls to their original values.

\begin{figure}[!ht]
\centering
\includegraphics[scale=1.05]{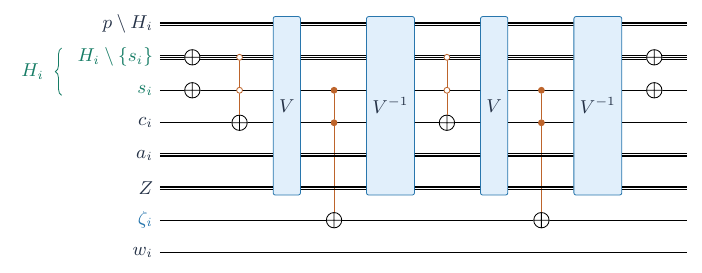}
\caption{Computing with $F$ after encoding. The circuit for block $i$ adds $\chi_i s_i$ to $\zeta_i$ using the same dirty echo. The bracketed word $H_i$ includes the position used for $s_i$. $V$ computes \eqref{eq:chain-sum} in that position when $H_i$ starts at zero. The borrowed data $a_i$ and $Z$ are preserved; all temporary calculations are reversed.}
\label{fig:chain-f-echo}
\end{figure}

\paragraph{Combining $M$ and $F$.} It remains to cancel the initial values in $a_i$. Figure~\ref{fig:path-echo} runs the four stages $\mathcal M,\mathcal E,\mathcal F,\mathcal E^{-1}$ twice. Suppose first that $\chi_i=1$. The first $\mathcal M$ changes $a_i$ to $a_i+M(\cdot,i)$, and the first $\mathcal F$ uses this value in its sum. The second $\mathcal M$ restores $a_i$, and the second $\mathcal F$ uses its original value. Thus the two sums cancel the unknown initial values in $a_i$. At stop $j$, with $k=i-j$, the contributions before applying the Pascal coefficient are
\begin{equation}
 (a_{i,k}+M(j,i))F(j,i)Z_j+a_{i,k}F(j,i)Z_j
 =\Pi(j,i)Z_j.
 \label{eq:two-read-cancellation}
\end{equation}
Here $a_{i,k}$ denotes its initial value. When $\chi_i=0$, both stages leave $a_i,\zeta_i$ unchanged, and every desired term is zero. This gives \eqref{eq:chain-addition} while preserving $Z$ and restoring all controls and borrowed bits.

\begin{figure}[!ht]
\centering
\includegraphics[scale=1.05]{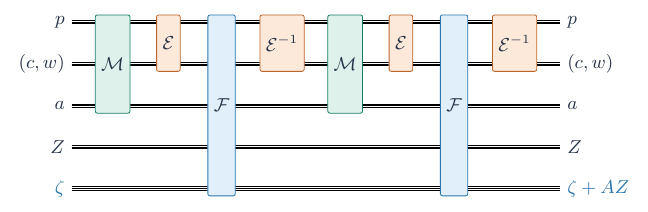}
\caption{The complete circuit for $(\zeta,Z)\mapsto(\zeta+AZ,Z)$. Read from left to right. $\mathcal M$ is the parallel application of Figure~\ref{fig:chain-m-echo}, and $\mathcal F$ that of Figure~\ref{fig:chain-f-echo}. Each $\mathcal M$ reads original controls; each $\mathcal F$ reads encoded controls. The two reads of $a$ cancel its initial values and retain the products $MF$. The words $a,c,w$ collect the borrowed bits of all blocks. This circuit will also be used for the forest walks.}
\label{fig:path-echo}
\end{figure}

\paragraph{Cost of the endpoint update.} Each of the $r$ walks has at most $r$ stops, giving $O(r^2)$ gates for copies, products, and sums. CNOT fanout trees copy shared inputs to the walks with $O(r^2)$ gates and $O(\log(r+1))$ depth. The additions controlled by the same $c_i$ use the common-control circuit from the preliminaries, with linear size and logarithmic depth. The zero tests on $H_i$ and the constant number of encoding calls use $O(rb)$ gates and $O(\log(b+1))$ depth. The other additions in \eqref{eq:register-echo} exchange the registers or redirect their outputs, and $Q_1$ changes the fixed Pascal selections.

\begin{lemma}[Recursive endpoint circuit]
\label{lem:recursive-endpoints}
For a sufficiently large fixed $\kappa\ge16$ and $b\ge\kappa r$, Step 2 has size $O(rb+r^2)$ and depth $O(\log(b+r+1))$. It uses only existing wires and restores all controls and borrowed interior data.
\end{lemma}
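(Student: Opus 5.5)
The proof assembles the pieces of Section~\ref{sec:recursive}. We check three things: correctness, the wire budget, and the cost bounds. For correctness we use the Pascal decomposition \eqref{eq:weighted-pascal-identities}. The circuit applies $Q_1$ and then $Q_0$ to the endpoints, which implements $J^{-1}$, i.e.\ $(p,\zeta)\mapsto(p,\eta)$. Each factor $I+A$ is realized by the four-addition echo \eqref{eq:register-echo}. That echo restores the borrowed register $Z$ on every input, because $A^2=0$, $ADA=A$ and $DA^2=0$. The identities were verified for both $A=Q_0-I$ and $A=Q_1-I$.

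Next we reduce each of the eight additions to the template \eqref{eq:chain-addition}. Each addition is either $\zeta\leftarrow\zeta+AZ$ or $Z\leftarrow Z+DA\zeta$. The second kind swaps the roles of the two registers and sends row $i$ to position $i-1$. So every addition is a family of walk sums with fixed Pascal selections, written into a target wire that lies in, or is indexed by, block $i$. The target must sit where block $i$'s conditional space is available. For $DA$ the row-$i$ sum goes to $Z_{i-1}$, which is still a single wire. The control factor $\chi_i$ comes from block $i$'s walk, so the echo of block $i$ simply writes its Toffoli output to a different target. Hence the circuit of Figure~\ref{fig:path-echo} applies to each addition. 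Its correctness is the case analysis already given:
\begin{itemize}
\item if $\chi_i=1$, the two reads of $a_i$ cancel by \eqref{eq:two-read-cancellation} and leave $\Pi(j,i)Z_j$;
\item if $\chi_i=0$, both echoes cancel, and every desired term is zero since it contains $\pi_i$.
\end{itemize}
Meanwhile $\mathcal E$ is self-inverse and restores the controls. The zero-test echoes restore $H_i$, $c_i$ and $w_i$ for arbitrary, even entangled, initial values. This is because each stage is a permutation of basis states that acts as the identity on those wires.

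Then we check the wire budget, which I expect to need the most care. Block $i$ must supply, disjointly from the endpoints:
\begin{itemize}
\item the interior bits $Z_i$, $w_i$, $c_i$ and the word $a_i$, that is $O(r)$ borrowed interior data;
\item $O(i)$ workspace inside $H_i$ for the copies, tree products and sum performed by $U$ and $V$.
\end{itemize}
The copies of $m_\nu$, $f_\nu$, $a_i$ and $Z_j$ cost a constant times $r$ bits. With $b\ge\kappa r$ and $\kappa$ a large enough constant (at least $16$), both needs fit within the $b-1$ interior positions and within $H_i$. The interior data and $H_i$ are the same wires used in two different roles. This is fine, since $H_i$ consists of control wires while $a_i$, $Z_i$, $c_i$, $w_i$ are data wires. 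Also, each $Z_j$ is read by many walks, so CNOT fanout copies supply the inputs; they are read only while the corresponding $H_i$ is zeroed.

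Finally, the costs. Each walk uses $O(i)\le O(r)$ gates in depth $O(\log(r+1))$, giving $O(r^2)$ gates in total. Fanout of shared inputs costs $O(r^2)$ gates and $O(\log(r+1))$ depth. The zero tests on $H_i$ and the encodings $\mathcal E$ use $k$-controlled $X$ gates with a borrowed dirty bit. These cost $O(b)$ gates per block in depth $O(\log(b+1))$, so $O(rb)$ gates overall. The common-control additions are linear in size and logarithmic in depth. Everything runs for a constant number of stages: eight additions, each with two passes of $\mathcal M,\mathcal E,\mathcal F,\mathcal E^{-1}$. Summing gives size $O(rb+r^2)$ and depth $O(\log(b+1)+\log(r+1))=O(\log(b+r+1))$.
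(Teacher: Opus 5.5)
Your proposal is correct and takes essentially the same route as the paper: the weighted Pascal factorization with the four-addition echo for each factor, the double pass $\mathcal M,\mathcal E,\mathcal F,\mathcal E^{-1}$ with zero-test conditional space in $H_i$ and two reads of $a_i$, the budget $b\ge\kappa r$, and the same cost accounting ($O(r^2)$ for walks and fanout, $O(rb)$ gates and $O(\log(b+1))$ depth for the zero tests and encodings). One wording slip: $H_i$ and the borrowed interior data occupy the same positions $1,\ldots,b-1$ of block $i$ but on distinct wires (controls versus data), not ``the same wires,'' and that is why they do not conflict.
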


\subsection{Recursion and resource bounds}
\label{sec:recursive-bounds}

The recursive construction uses large blocks, so the direct interior correction would take too much depth. Here we use the incrementer circuit in Figure~\ref{fig:interior-echo} instead. Incrementing the little-endian word $(y_{j-1,b},p_{j,1},\ldots,p_{j,b})$ changes
\begin{equation}
 p_{j,t+1}\longmapsto p_{j,t+1}
   +y_{j-1,b}\prod_{u=1}^{t}p_{j,u},
 \qquad 1\le t<b.
 \label{eq:increment-prefix}
\end{equation}
The two CNOT layers add the old and new values of each control to the corresponding interior bit. Their original values cancel, leaving exactly the correction in \eqref{eq:block-interiors}. The inverse increment restores the incoming endpoint and all controls. Each incrementer borrows and restores the first interior bit. As before, different blocks use disjoint wires, so the corrections run in parallel with total size $O(n)$ and depth $O(\log(b+1))$.

\begin{figure}[!ht]
\centering
\includegraphics[scale=1.05]{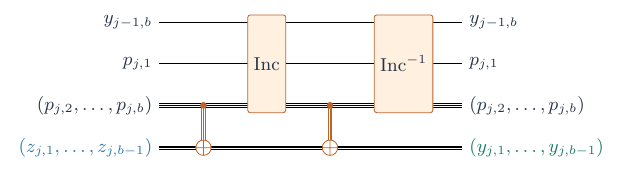}
\caption{The incrementer correction used for the larger recursive blocks. The two bundled wires each contain $b-1$ bits, and the CNOTs act bitwise. The increment and its inverse act on the little-endian word $(y_{j-1,b},p_{j,1},\ldots,p_{j,b})$, each borrowing and restoring the first interior bit.}
\label{fig:interior-echo}
\end{figure}

Each walk needs space proportional to its number of stops. We therefore choose blocks large enough to hold these calculations. Fix $\kappa$ as in Lemma~\ref{lem:recursive-endpoints}, and take $r\ge2$ to be the largest power of two satisfying $\kappa(r+1)r\le n$. Set
\[
 b=\left\lceil\frac{n}{r+1}\right\rceil.
\]
For the finitely many smaller sizes, use the direct ladder. Then $b\ge\kappa r$, $r=\Theta(\sqrt n)$, and $b=\Theta(\sqrt n)$. Adjust $B_0$ for the remainder as in Section~\ref{sec:blocks}. Its length $n-rb$ lies between $b-r$ and $b$, leaving enough interior data for its borrowed source bit.

The endpoint update now has size $O(n)$ and depth $O(\log n)$. Run the local ladders recursively in parallel and use the interior correction above. The resulting bounds are
\begin{equation}
\begin{aligned}
 S(n)&\le(r+1)S(b)+O(n),\\
 D(n)&\le D(b)+O(\log(n+1)).
\end{aligned}
 \label{eq:recursive-bounds}
\end{equation}
At each level, the blocks partition the data, so the work outside recursive calls is $O(n)$. At each level, the block size falls to its square root, up to a constant factor. It reaches constant size after $O(\log\log n)$ levels, giving size $O(n\log\log n)$. Along a branch, the logarithms of the block sizes decrease geometrically, so the depths sum to $O(\log n)$. All borrowed wires are restored within each call. Lemma~\ref{lem:reduction} gives the same bounds for addition, proving Theorem~\ref{thm:recursive}.

The remaining size factor comes from repeating linear work at every level. We used large blocks because each endpoint walked along a chain of up to $r$ stops. Shortening the walks to $O(\log r)$ stops will let us use smaller blocks. We can then keep the circuit of Figure~\ref{fig:path-echo} and complete the construction without recursion.

\section{The linear-size construction}
\label{sec:linear}

The recursive construction needs large blocks because each walk can have up to $r$ stops. We now shorten the walks to $O(\log r)$ stops. The same circuit then needs only $O(\log r)$ conditional space per block, so logarithmic block size suffices. We can run the local ladders directly and remove the recursion.

We keep the three steps of Section~\ref{sec:blocks}. For the endpoint update, decompose its inverse $J$ into two forests, apply the Pascal identities along their paths, and reuse the borrowed-register additions from \eqref{eq:register-echo}. To use these shorter walks, we need to encode the longer intervals crossed by forest edges. We also need to combine contributions from walks that reach the same endpoint.

\subsection{From one ladder to two forests}
\label{sec:forest-decomposition}

We replace the chain of consecutive endpoints by two forests whose paths are short. Their maps will satisfy $J=F_{\mathrm{in}}F_{\mathrm{out}}$. For $i>0$, let $v_2(i)$ be the largest integer $k$ such that $2^k$ divides $i$, and write
\[
 \lambda(i)=2^{v_2(i)},\qquad
 \Pi(a,b)=\prod_{\nu=a+1}^{b}\pi_\nu\quad(0\le a<b\le r).
\]
Thus $\lambda(i)$ is the value of the lowest set bit in the binary expansion of $i$. It determines the interval crossed by an edge. Both maps fix coordinate zero; for $i>0$, define
\begin{equation}
\begin{aligned}
 (F_{\mathrm{in}}u)_i
   &=u_i+\sum_{\substack{j>0\\j+\lambda(j)=i}}\Pi(j,i)u_j,\\
 (F_{\mathrm{out}}u)_i
   &=u_i+\Pi(i-\lambda(i),i)u_{i-\lambda(i)}.
\end{aligned}
\label{eq:forest-factorization}
\end{equation}
To check that their product is $J$, fix $i>0$, set $t=v_2(i)$, and write $j_k=i-2^k$ for $0\le k\le t$. The indices in the sum for $F_{\mathrm{in}}$ are $j_0,\ldots,j_{t-1}$. For $k<t$, we have $\lambda(j_k)=2^k$, so
\[
 \Pi(j_k,i)(F_{\mathrm{out}}u)_{j_k}
 =\Pi(j_k,i)u_{j_k}+\Pi(j_{k+1},i)u_{j_{k+1}}.
\]
The second term of each pair is the first term of the next pair, so these terms cancel modulo $2$. The last term cancels the extra term $\Pi(j_t,i)u_{j_t}$ in $(F_{\mathrm{out}}u)_i$, leaving $u_i+\Pi(i-1,i)u_{i-1}=u_i+\pi_i u_{i-1}$, as required by \eqref{eq:inverse-endpoints}. When $i$ is odd, $t=0$ and the sum is empty; $F_{\mathrm{out}}$ already gives this update.

Figure~\ref{fig:forest-decomposition} shows the two maps as forests: $F_{\mathrm{out}}$ has an edge $i-\lambda(i)\to i$, and $F_{\mathrm{in}}$ has an edge $j\to j+\lambda(j)$ whenever the successor is at most $r$. Each edge $a\to b$ carries weight $\Pi(a,b)$. In the left forest, each nonzero vertex has one parent; in the right, each vertex has at most one successor. Following parents on the left clears one set bit at each step. Following successors on the right moves the lowest set bit to a higher position at each step. Both walks therefore take at most a logarithmic number of steps, and the forests have height at most $h=\lceil\log_2(r+2)\rceil$.

To apply the Pascal identities along a path, we need consecutive indices for its vertices. We call these indices grades. In $F_{\mathrm{out}}$, give vertex $i$ grade $d_i$ equal to its distance from the root. In $F_{\mathrm{in}}$, use $h$ minus its distance to the sink, the vertex with no successor. Every directed edge then increases the grade by one. The green row labels in Figure~\ref{fig:forest-decomposition} show these grades. For example, the highlighted path $0\to4\to6\to7$ on the left has grades $0,1,2,3$, while $1\to2\to4\to8$ on the right has grades $1,2,3,4$. Each path is therefore a ladder indexed by consecutive grades, with vertices drawn from possibly nonconsecutive endpoints. Its step from $a$ to $b$ is controlled by $\Pi(a,b)$, the conjunction of all controls in blocks $B_{a+1},\ldots,B_b$. Successive edge intervals are disjoint and concatenate, so the weight of a path is
\[
 \prod_{\ell=1}^{m}\Pi(v_{\ell-1},v_\ell)=\Pi(v_0,v_m)
 \quad\text{for }v_0\to v_1\to\cdots\to v_m.
\]
The edge labels and products beneath the orange paths in Figure~\ref{fig:forest-decomposition} illustrate this correspondence.
\begin{figure}[!ht]
\centering
\includegraphics[scale=1.05]{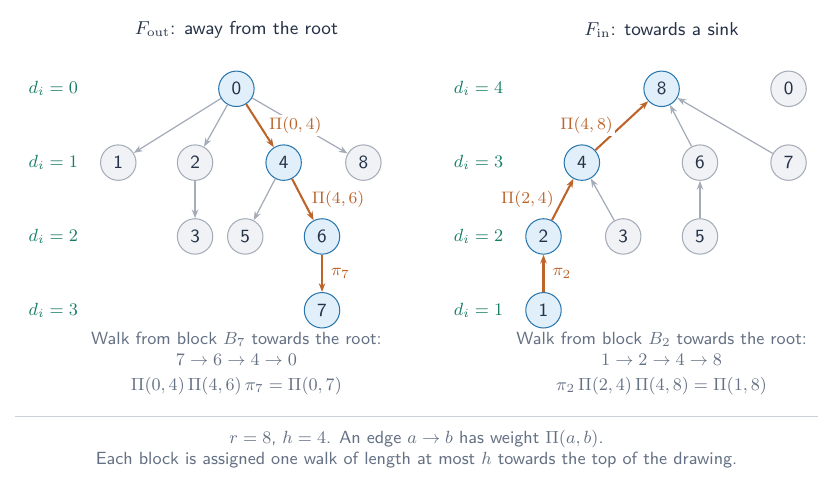}
\caption{The two forests on endpoints $0,\ldots,8$, with $h=4$. Green labels give the grade $d_i$ of every vertex in the corresponding row. The orange paths show the edges visited by one block's walk in each forest. Their edge weights multiply to the conjunction over the full path interval. Each path is a ladder on its ordered vertices, which may be nonconsecutive endpoints.}
\label{fig:forest-decomposition}
\end{figure}

We can now reuse Lemma~\ref{lem:pascal} on each path. Fix either forest $F$, and define its factor $Q_0$ by an identity diagonal and
\begin{equation}
 (Q_0)_{ij}=\left(\binom{d_i}{d_j}\bmod2\right)\Pi(j,i)
 \quad\text{if there is a nonempty directed path }j\to i.
\label{eq:forest-pascal}
\end{equation}
All other entries below the diagonal are zero. Define $Q_1$ in the same way with both grades decreased by one, setting its entries below the diagonal to zero when $d_j=0$. There is at most one directed path between any two vertices. Matrix multiplication along this path is exactly the weighted Pascal calculation from Section~\ref{sec:recursive-endpoints}, so
\begin{equation}
 F=Q_1Q_0,\qquad Q_0^2=Q_1^2=I.
\label{eq:forest-involutions}
\end{equation}
The parity of each binomial coefficient determines whether its path contribution is added. To apply $J^{-1}$, apply the two factors for $F_{\mathrm{in}}^{-1}$ and then those for $F_{\mathrm{out}}^{-1}$, using $Q_1$ followed by $Q_0$ in each case.

\subsection{Reusing the borrowed-register additions}
\label{sec:forest-dirty-register}

For a forest factor $Q=I+A$, we use the four additions in \eqref{eq:register-echo} to apply $Q$ and restore the borrowed $Z$. The identity $A^2=0$ follows again from $Q^2=I$. It remains to check $ADA=A$ for the same backward shift $D$ used in Section~\ref{sec:recursive-endpoints}.

For $Q_0$, expand $(ADA)_{ij}=\sum_{k=0}^{r-1}A_{ik}A_{k+1,j}$. In $F_{\mathrm{out}}$, a nonempty directed path can start only at an even vertex. The contributing indices $k$ lie on the path from $j$ to $i$, excluding $i$, and vertex $k+1$ is a child of $k$. In $F_{\mathrm{in}}$, a nonempty directed path can end only at a positive even vertex. Here the contributing vertices $k+1$ lie on the path after $j$, including $i$, and $k$ is the odd predecessor of $k+1$. In both cases $d_{k+1}=d_k+1$, and the possible grades $d_k$ range from $d_j$ to $d_i-1$. The weights multiply to $\Pi(k,i)\Pi(j,k+1)=\Pi(j,i)$, since the repeated factor $\pi_{k+1}$ satisfies $\pi_{k+1}^2=\pi_{k+1}$. Thus, on a nonempty path $j\to i$,
\begin{equation}
\begin{aligned}
 (ADA)_{ij}
 &\equiv\Pi(j,i)\sum_{d=d_j}^{d_i-1}
       \binom{d_i}{d}\binom{d+1}{d_j}\\
 &\equiv\Pi(j,i)\binom{d_i}{d_j}=A_{ij}\pmod2,
\end{aligned}
\label{eq:forest-echo-identities}
\end{equation}
where the second equality is exactly the chain identity \eqref{eq:chain-echo-identities}, with grades in place of indices. If there is no nonempty path, both entries are zero. For $Q_1$, decrease both grades by one. The columns whose source has grade zero are zero on both sides. The same four-addition circuit therefore applies every forest factor.

The two forests contribute four factors. Sixteen additions suffice for the endpoint ladder, all reusing the same restored $Z$. As before, we describe $\zeta\leftarrow\zeta+AZ$; the other additions exchange the registers or redirect row $i$ to position $i-1$.

\subsection{Parallel walks from the blocks}
\label{sec:shared-interval-encoding}
\label{sec:batched-paths}

The four-addition echo leaves us with the same task as before: implement
\[
 (\zeta,Z)\longmapsto(\zeta+AZ,Z),
\]
while restoring all temporary data. Here $\zeta$ contains the block endpoints and $Z$ is a borrowed register of $r+1$ interior bits. We first take $A=Q_0-I$ for the outward forest. Expanding the update gives, for each endpoint,
\[
 \zeta_i\longmapsto\zeta_i+
 \sum_{j:\,j\to i}
 \left(\binom{d_i}{d_j}\bmod2\right)\Pi(j,i)Z_j,
 \qquad 0\le i\le r,
\]
where the sum runs over nonempty directed paths $j\to i$. In the left forest of Figure~\ref{fig:forest-decomposition}, these sources $j$ are the ancestors of $i$ other than $i$ itself. Each term uses the source bit $Z_j$, controlled by the product of edge weights along the path from $j$ to $i$, namely $\Pi(j,i)$. The Pascal coefficient depends only on the grades, so the circuit knows in advance which terms to include.

For example, when $r=8$, the highlighted path in the left forest of Figure~\ref{fig:forest-decomposition} contains vertices $7,6,4,0$, with grades $3,2,1,0$, respectively. Their endpoint updates are
\begin{equation}
\begin{aligned}
 \zeta_7&\longmapsto\zeta_7+
 \pi_7Z_6+\Pi(4,7)Z_4+\Pi(0,7)Z_0,\\
 \zeta_6&\longmapsto\zeta_6+\Pi(0,6)Z_0,\\
 \zeta_4&\longmapsto\zeta_4+\Pi(0,4)Z_0,\\
 \zeta_0&\longmapsto\zeta_0.
\end{aligned}
\label{eq:walk-seven-goal}
\end{equation}
For $\zeta_7$, walk along $7\to6\to4\to0$. The edge controls are $\pi_7$, $\Pi(4,6)$, and $\Pi(0,4)$. At each stop, multiply the edge controls used so far. Use this product to control the addition of the source bit at that stop. All three Pascal coefficients are one. For $\zeta_6$, walk along $6\to4\to0$. Skip the addition at $4$ because $\binom21\equiv0\pmod2$, but retain the edge control $\Pi(4,6)$: at $0$, add $\Pi(4,6)\Pi(0,4)Z_0=\Pi(0,6)Z_0$. The walk $4\to0$ adds $\Pi(0,4)Z_0$ to $\zeta_4$, and the root $\zeta_0$ is unchanged. Each block $B_i$ has its own walk from vertex $i$ to the root, and we perform all walks in parallel.

These are the same running-product calculations as in Section~\ref{sec:chain-walks}, now with at most $h$ stops. A forest edge can cross many blocks. Computing its weight by scanning those blocks separately for each walk would undo the saving from shorter paths. All edge intervals belong to
\[
 \mathcal I=\{(j,i]\subseteq(0,r]:i-j=2^k,
                  \ j\equiv0\pmod{2^k},\ k\ge0\}.
\]
These are dyadic intervals: each has power-of-two length and starts at a multiple of that length. We encode them once and share their encoded bits among the walks. The construction of Nie, Zi, and Sun~\cite{NieZiSun2024} extends the one-block identity \eqref{eq:block-encoded-flag}. It represents each $\Pi(j,i)$ as the conjunction of one encoded bit and at most four selected original control bits. We call these selected controls its delegates.

We use conditional space inside each block for the same reason as in Section~\ref{sec:chain-walks}: every term added to $\zeta_i$ contains $\pi_i=\prod_{t=1}^{b}p_{i,t}$. For example, with $r=8$,
\[
 \Delta\zeta_7=\pi_7\bigl(
 Z_6+\Pi(4,6)Z_4+\Pi(4,6)\Pi(0,4)Z_0\bigr).
\]
If any control chosen for conditional space in $B_7$ is zero, the whole change vanishes. If all chosen controls are one, complementing them provides zero space. A walk now has only $h$ stops, so we need only $O(h)$ such bits in each block. For the outward walks, the space in $B_i$ is used only to compute $\zeta_i$.

Figure~\ref{fig:path-layout} shows the bits used by the circuit. Take $H_i=(p_{i,1},\ldots,p_{i,6h+2})$, assuming $b-4\ge6h+2$. The four control positions at the bottom of each block are for delegates. The first two can be chosen at a larger interval and passed to smaller intervals. The third and fourth are chosen at the one-block interval. Their physical indices are $b-3,b-2,b-1,b$. The words $a_i$, with one bit per stop, and the bits $c_i,w_i$ are borrowed from interior data, disjoint from $Z$ and the endpoints. They may start in arbitrary states and will be restored. The bit $c_i$ is drawn above $B_i$ because it controls the use of $H_i$.

\begin{figure}[!ht]
\centering
\makebox[\linewidth][c]{%
\includegraphics[scale=1.05]{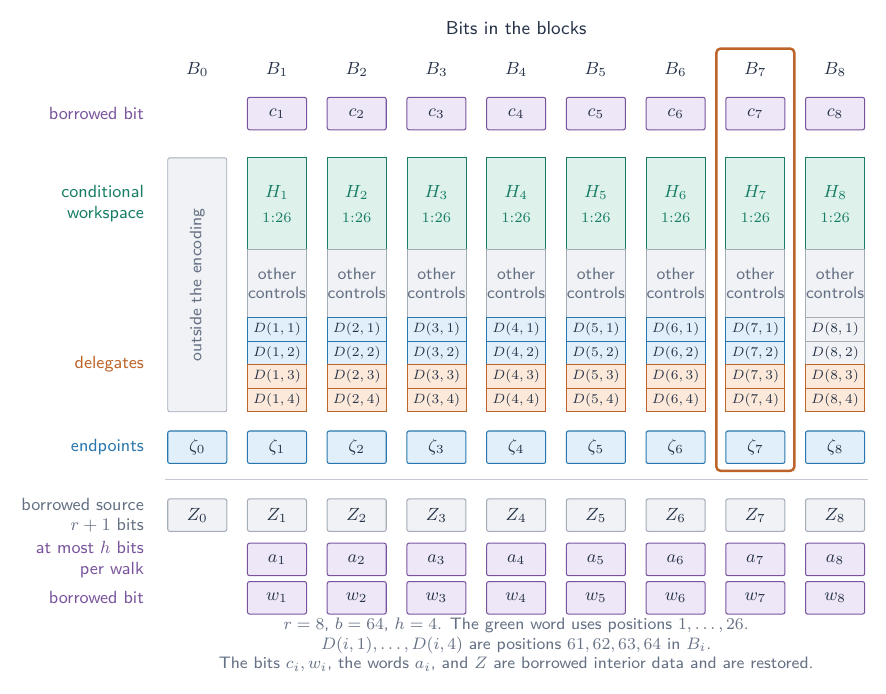}}
\caption{The layout for $r=8$, $b=64$, and $h=4$. Green bits form $H_i$. The four bottom positions are $D(i,1),\ldots,D(i,4)$, with physical coordinates $(i,61),\ldots,(i,64)$. Blue positions are chosen at larger intervals; orange positions are chosen at $B_i$. The grey positions in $B_8$ are left unchanged. The words $a_i$, the bits $c_i,w_i$, and $Z$ are existing interior data, all restored by the circuit.}
\label{fig:path-layout}
\end{figure}

\paragraph{Encoding the interval products.} Let $D(i,1),\ldots,D(i,4)$ name four distinct control positions in $B_i$. We first describe the recursion using these positions, then fix their coordinates. Each recursive call works on one interval and chooses two control bits there. Smaller calls omit positions already chosen at larger intervals. At an internal interval, the circuit tests the four bits chosen by its two children, complements them, runs the child circuits, and combines their encoded bits. This leaves the interval product in one encoded bit whenever its original delegates are all one.

Figure~\ref{fig:interval-encoding} shows two recursion layers on eight blocks. The root uses $p_{D(1,1)},p_{D(1,2)}$, and its children use the pairs in $B_3,B_5$. To see how the circuit works, consider the case where the root's two bits enter at zero. First compute
\[
 p_{D(1,2)}\longleftarrow p_{D(1,2)}+
 p_{D(3,1)}p_{D(3,2)}p_{D(5,1)}p_{D(5,2)}.
\]
Complement those four child bits, run the two child circuits, and then compute
\[
 p_{D(1,1)}\longleftarrow p_{D(1,1)}+
 p_{D(1,2)}p_{D(3,1)}p_{D(5,1)}.
\]
The next layer repeats these gates: the call on $(0,4]$ uses the pairs in $B_2,B_4$, and the call on $(4,8]$ uses those in $B_6,B_7$. At a one-block interval $B_i$, use $D(i,3),D(i,4)$ and add the conjunction of the controls outside all chosen positions into $p_{D(i,3)}$. Before starting the root, complement its two chosen bits. Every other pair is complemented by its parent before its call begins.

\begin{figure}[!ht]
\centering
\includegraphics[scale=1.05]{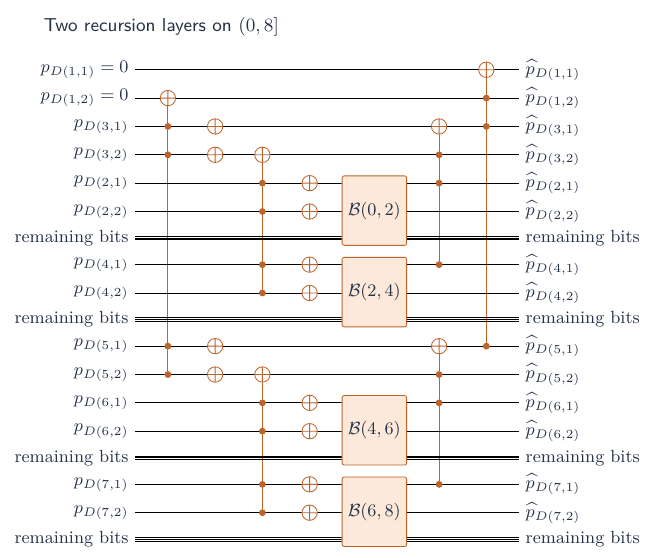}
\caption{Two layers of the interval encoding for $r=8$. The root pair is shown after its initial complements, on the branch where it enters at zero. Each box $\mathcal B(j,i)$ continues the recursion on $(j,i]$, omitting positions chosen at larger intervals. Hats mark the values after encoding.}
\label{fig:interval-encoding}
\end{figure}

We now fix the positions. Use the four bottom controls in each block:
\begin{equation}
\begin{aligned}
 D(i,1)&=(i,b-3),&\qquad D(i,2)&=(i,b-2),\\
 D(i,3)&=(i,b-1),& D(i,4)&=(i,b).
\end{aligned}
\label{eq:block-delegate-indices}
\end{equation}
Apply the following formula separately within each maximal dyadic interval, writing $s$ for its right endpoint. At the interval $(j,i]$, choose the two positions $D(k,t),D(k,t+1)$, where
\begin{equation}
 (k,t)=
 \begin{cases}
 (i,3),&i-j=1,\\[2pt]
 \bigl(j+1+\tfrac{i-j}{2}[i<s],\,1\bigr),&i-j>1.
 \end{cases}
\label{eq:interval-delegate-choice}
\end{equation}
At intervals containing several blocks, an inherited pair lies in the first block. The formula places the new pair in the other half when that inherited pair is present; otherwise it uses the first block. This keeps the two pairs in different children.

For example, Figure~\ref{fig:interval-delegates} starts with the pair $D(1,1),D(1,2)$ at $(0,8]$. The interval $(0,4]$ inherits that pair and chooses $D(3,1),D(3,2)$ in its other half. Its two children then inherit the pairs in $B_1,B_3$ and choose the pairs in $B_2,B_4$. The interval $(4,8]$ starts with the pair in $B_5$, and its children choose the pairs in $B_6,B_7$. For the walk $7\to6\to4\to0$, the intervals $(6,7]$, $(4,6]$, and $(0,4]$ therefore have their encoded bits at $D(7,3)$, $D(6,1)$, and $D(3,1)$, respectively. When $b=64$, these are positions $(7,63)$, $(6,61)$, and $(3,61)$.

\begin{figure}[!ht]
\centering
\makebox[\linewidth][c]{%
\includegraphics[width=1.05\linewidth]{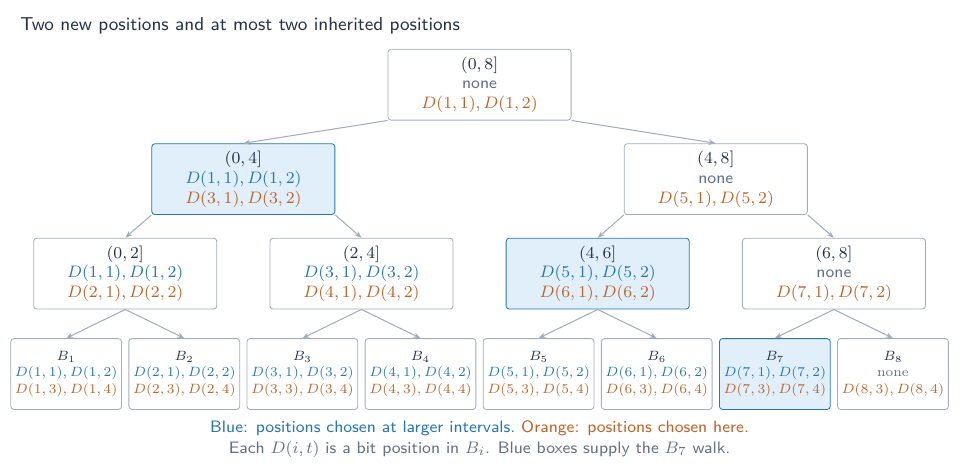}}
\caption{The chosen positions for $r=8$: inherited positions in blue, newly chosen positions in orange. Each $D(i,t)$ is one physical bit index in $B_i$. The highlighted intervals supply the walk $7\to6\to4\to0$.}
\label{fig:interval-delegates}
\end{figure}

Write $\mathcal E:p\mapsto\widehat p$ for the encoding. For an interval $(j,i]$ and its $(k,t)$ from \eqref{eq:interval-delegate-choice}, define
\begin{equation}
\begin{aligned}
 m(j,i)&=p_{D(k,t)}p_{D(k,t+1)}
       \bigl(p_{D(j+1,1)}p_{D(j+1,2)}\bigr)^{[i<s]},\\
 f(j,i)&=\widehat p_{D(k,t)}.
\end{aligned}
\label{eq:shared-interval-flag}
\end{equation}
The parenthesized factor is one when $i=s$. Here $m(j,i)$ is a product of original delegate values, while $f(j,i)$ is read after encoding. These quantities satisfy
\begin{equation}
 \Pi(j,i)=m(j,i)f(j,i).
 \label{eq:shared-interval-product}
\end{equation}

To see why $m$ needs at most four controls, follow the chosen pairs down the tree. An internal interval ending at $s$ inherits no positions; its new pair is in the first block of its left child. The left child inherits that pair, and the right child inherits none. An internal interval ending before $s$ inherits exactly $D(j+1,1),D(j+1,2)$. Its new pair is in the first block of the right child, so each child inherits exactly one pair. Induction therefore gives at most two inherited positions at every interval. Together with its own two chosen positions, these are precisely the at most four controls in $m(j,i)$.

If any of these original controls is zero, both $\Pi(j,i)$ and $m(j,i)$ are zero. Otherwise, the call's own pair enters at zero and all the omitted controls from larger intervals have value one. If any of the four child bits was zero before its complement, the call's second bit stays zero, so its final gate does not act. If all four were one, both children enter with zero pairs, compute the products of their remaining controls, and the final gate multiplies their encoded values. At a leaf this is exactly the single conjunction gate. Thus, when the original delegates are all one, the first encoded bit gives the conjunction of all remaining controls. This proves \eqref{eq:shared-interval-product} by induction. Each child circuit runs regardless of the other child's values, so the identity holds for every interval below the root.

Only the four positions in \eqref{eq:block-delegate-indices} can change, so the encoding preserves $H_i$. There are $O(r)$ internal calls and size-$O(b)$ work at each leaf. Hence $\mathcal E$ has size $O(N)$ and depth $O(\log(N+1))$, where $N=rb$. The first controlled gate of an internal call can borrow its first chosen bit, and the last can borrow a child's second chosen bit; these borrowed bits are restored. A leaf gate can borrow its own second chosen bit.

\paragraph{Computing the endpoint updates.} Use the complete circuit in Figure~\ref{fig:path-echo}, now with shorter walks and the shared interval encoding $\mathcal E$. The position for $s_i$ is again the last bit of $H_i$, namely $(i,6h+2)$. Inside $\mathcal M,\mathcal F$, the circuits $U,V$ now form their products along the forest paths.

For the walk $7\to6\to4\to0$, the three edge controls have the representations in \eqref{eq:shared-interval-flag}--\eqref{eq:shared-interval-product}. Before encoding, $U$ copies their original delegates into $H_7$ and computes
\[
\begin{aligned}
 M(6,7)&=m(6,7),\\
 M(4,7)&=m(6,7)m(4,6),\\
 M(0,7)&=m(6,7)m(4,6)m(0,4).
\end{aligned}
\]
After encoding, $V$ copies the encoded bits and computes $F(6,7),F(4,7),F(0,7)$ by the same products with $f$ in place of $m$. Together with copies of $a_7,Z$, they give
\[
 s_7=a_{7,1}F(6,7)Z_6+a_{7,2}F(4,7)Z_4+
      a_{7,3}F(0,7)Z_0.
\]
In general, along $i=j_0\to j_1\to\cdots\to j_t$, define
\[
 M(j_k,i)=\prod_{\ell=1}^{k}m(j_\ell,j_{\ell-1}),\qquad
 F(j_k,i)=\prod_{\ell=1}^{k}f(j_\ell,j_{\ell-1}).
\]
The word $M(\cdot,i)$ lists these products in walk order, and
\[
 s_i=\sum_{k=1}^{t}
       \left(\binom{d_i}{d_{j_k}}\bmod2\right)
       a_{i,k}F(j_k,i)Z_{j_k}.
\]
Each original edge product $m(j_\ell,j_{\ell-1})$ uses at most four delegates. With only $h$ stops, these running products take $O(h)$ gates and depth. Both $U,V$ compute into $H_i$ from unchanged outside inputs; $U$ leaves the position for $s_i$ untouched. Their inverses restore all temporary copies and products.

The circuits for $\mathcal M$ and $\mathcal F$ use the same zero-test echoes as in Figures~\ref{fig:chain-m-echo} and~\ref{fig:chain-f-echo}. Each $H_i$ is restored before every encoding or decoding, so $\mathcal E$ and its inverse always see the correct control values. The two calls to each stage in Figure~\ref{fig:path-echo} restore $a_i$ and cancel its initial values by \eqref{eq:two-read-cancellation}. At stop $j_k$, the remaining contribution is
\[
 M(j_k,i)F(j_k,i)Z_{j_k}=\Pi(j_k,i)Z_{j_k},
\]
when its Pascal coefficient is odd. If any original control in $H_i$ was zero, every desired term is zero, and the zero-test echoes in both stages make no net change. This completes the outward update and restores all controls and borrowed bits.

For the inward forest, a block's walk starts at a source and follows its successors. Attach $B_i$ to source vertex $i-1$. Its source bit is fixed at $Z_{i-1}$, and each visited vertex specifies an endpoint to change. At a stop $j$, add $\Pi(i-1,j)Z_{i-1}$ when $\binom{d_j}{d_{i-1}}$ is odd. This interval contains $B_i$, so the conditional space is $H_i$. The running products are $M(i-1,j)$ and $F(i-1,j)$, formed along the path from $i-1$ to $j$. Different walks may reach the same endpoint, so we first store their contributions separately. Give each stop a borrowed bit $v_{i,k}$. Here $V$ computes its individual terms in separate positions of $H_i$ and adds each term to its $v_{i,k}$, controlled by $c_i$. We then add the parity of the borrowed bits belonging to each endpoint. For example, the three walks from $B_2,B_3,B_4$ give
\[
\begin{aligned}
 v_{2,2}&\leftarrow v_{2,2}+\Pi(1,4)Z_1,\\
 v_{3,1}&\leftarrow v_{3,1}+\Pi(2,4)Z_2,\\
 v_{4,1}&\leftarrow v_{4,1}+\pi_4Z_3.
\end{aligned}
\]
Add these three bits to $\zeta_4$, run their circuits again to restore them, and add the three bits to $\zeta_4$ again. Their initial values cancel and the three desired terms remain, as Figure~\ref{fig:walk-buffers} shows. Group the bits by the endpoint they change and use CNOT trees for these parity additions.

\begin{figure}[!ht]
\centering
\includegraphics[width=.966\linewidth]{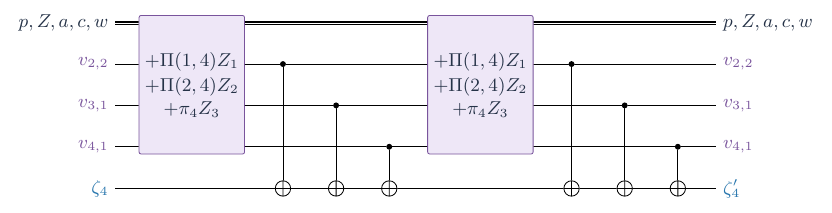}
\caption{Three inward walks contribute to $\zeta_4$. Each box runs the complete circuit with $v_{2,2},v_{3,1},v_{4,1}$ as the bits to change, and restores its other bits. The first three CNOTs add their new parity; the last three add their original parity. All three borrowed bits are restored.}
\label{fig:walk-buffers}
\end{figure}

For $Z\leftarrow Z+DA\zeta$, use $\zeta_j$ as the source and $Z_{i-1}$ in place of an update to $\zeta_i$. Every nonempty path ends at a positive index. For $Q_1$, omit terms whose source has grade zero and decrease both grades by one when selecting the additions.

Each walk has at most $h$ stops. The copies and running products in $U$ use at most $5h+1$ bits of $H_i$. For outward walks, $V$ uses at most $4h+2$ bits, including $s_i$; for inward walks its separate terms use at most $5h+1$ bits. All fit in $6h+2$ positions and take $O(h)$ gates and depth per walk. Copying shared inputs and adding the inward parities take $O(rh)$ gates and logarithmic depth. The encoding is run a constant number of times. We obtain the following bound.

\begin{lemma}[Parallel path update]
\label{lem:batched-path-update}
For either forest and either Pascal factor $Q=I+A$, each of the updates
\[
 (\zeta,Z)\longmapsto(\zeta+AZ,Z),\qquad
 (\zeta,Z)\longmapsto(\zeta,Z+DA\zeta)
\]
has size $O(N+rh)$ and depth $O(\log(N+r+1)+h)$. It uses $6h+2$ control positions per block that the encoding leaves unchanged, one borrowed bit per stop for the outward forest and two for the inward forest, and two more borrowed bits $c_i,w_i$ per walk. All controls and borrowed bits are restored, and the source register is preserved.
\end{lemma}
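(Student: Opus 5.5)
The plan is to assemble the lemma from pieces already described, checking correctness and then cost. The circuit for $(\zeta,Z)\mapsto(\zeta+AZ,Z)$ is the one in Figure~\ref{fig:path-echo}: it runs $\mathcal M,\mathcal E,\mathcal F,\mathcal E^{-1}$ twice, using the shared interval encoding and forest walks of at most $h$ stops. Correctness has three parts.

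\begin{enumerate}
\item By \eqref{eq:shared-interval-product}, the interval products along a path multiply. Because consecutive edge intervals are disjoint, this gives $M(j_k,i)F(j_k,i)=\Pi(j_k,i)$.
\item The zero-test echo applies the conditional-space computation exactly when the chosen controls $H_i$ are all one. Every desired term contains $\prod_{t\le 6h+2}p_{i,t}$, since $B_i$ lies in every interval on the walk. So when $H_i$ is not all ones, the echo makes no net change and the desired update is also zero.
\item The two reads of $a_i$ cancel its initial value by \eqref{eq:two-read-cancellation}.
\end{enumerate}

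One point needs a separate check. $\mathcal E$ modifies only the four delegate positions $b-3,\dots,b$. These are disjoint from $H_i$ because $b-4\ge 6h+2$. Hence the zero tests in $\mathcal F$ still see the original $H_i$, and each encoding and decoding sees the restored controls.

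For the inward forest, several walks add to the same $\zeta_j$. The plan here is the buffer echo of Figure~\ref{fig:walk-buffers}:
\begin{enumerate}
\item Write each walk's terms into its own borrowed bits $v_{i,k}$.
\item XOR the group parities into $\zeta_j$.
\item Rerun the complete circuit to toggle the $v$'s back.
\item XOR the parities again.
\end{enumerate}
The initial values of the $v$'s cancel, and the desired terms remain once. A subtle point is that this rerun must leave the $v$'s restored. It does, because the complete circuit adds the same fixed quantity each time, and that quantity is determined by the unchanged $Z$ and controls. The update $Z\leftarrow Z+DA\zeta$ follows by exchanging the roles of source and target and shifting the target index by $D$. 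The $Q_1$ variant only changes which parities of binomial coefficients are selected, and the circuit knows these in advance.

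The cost count goes as follows.
\begin{itemize}
\item Encoding: $\mathcal E$ has size $O(N)$ and depth $O(\log(N+1))$, and it is called a constant number of times.
\item Zero tests on $H_i$: each is a $(6h+2)$-controlled $X$ using one borrowed dirty bit $w_i$, costing $O(h)$ gates and $O(\log h)$ depth per block, so $O(rh)$ in total.
\item Circuits $U,V$: the per-walk circuits run in parallel. Each copies $O(h)$ inputs and forms running products and sums of $O(h)$ terms. This gives $O(h)$ gates and $O(h)$ depth; a tree would even reduce the depth, but $O(h)$ suffices.
\item Fanout of shared inputs: an interval bit or $Z_j$ may be read by up to $O(r)$ walks simultaneously. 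Copying them with CNOT fanout trees costs $O(rh)$ gates and $O(\log(r+1))$ depth. For this I will bound the total number of reads by $r$ walks times $O(h)$ reads each.
\item Additions controlled by $c_i$: these use the common-control circuit.
\item Inward parities: these are CNOT trees of total size $O(rh)$ and logarithmic depth.
\end{itemize}
The space count reads off the layout. $U$ uses at most $5h+1$ bits and $V$ at most $5h+1$, both within $6h+2$. Borrowed bits per walk are $a_i$, with one bit per stop, plus $v_{i,k}$ in the inward case, and $c_i,w_i$.

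The step I expect to be the main obstacle is the fanout and scheduling of shared inputs. Gates in one layer need disjoint supports, but a single encoded interval bit or source bit $Z_j$ may be needed by many walks at once. I would handle this by first fanning each shared bit out into the copy slots of all walks that use it, via CNOT trees. Each walk's copy slot in $H_i$ starts at zero in the active branch, so XOR-fanout gives correct copies there. The inactive branch is harmless, because the echo cancels whatever is computed. The uncopy is the reverse fanout. One must also make sure that the walks lists, $O(rh)$ in total, fit inside the $\log$ depth budget. That is where the bound $O(\log(N+r+1)+h)$ comes from.
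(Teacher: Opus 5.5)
Your proposal is correct and follows the paper's argument essentially step for step. The matching pieces are: the Figure~\ref{fig:path-echo} circuit with the shared interval encoding; zero-test echoes on $H_i$, kept disjoint from the four delegate positions; the two-read cancellation of $a_i$; the buffered parity echo for the inward walks; and the same $O(N+rh)$ size and $O(\log(N+r+1)+h)$ depth accounting. The one choice you leave implicit is that the inward walk from source vertex $i-1$ is the one assigned the conditional space $H_i$, which is what guarantees that $B_i$ lies in every path interval $(i-1,j]$ as your conditional-space argument requires.
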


\subsection{Completing the adder without recursion}
\label{sec:linear-bounds}

Choose a sufficiently large fixed $\kappa\ge64$, let $L=\lceil\log_2(n+1)\rceil$, and take $b$ to be the next power of two at least $\kappa L$. Use the partition of Section~\ref{sec:blocks}, with $r=\lceil n/b\rceil-1$ full blocks after $B_0$. For the finitely many small sizes with $n\le2b$, use the direct ladder. Otherwise $h\le L$, $N=rb\le n$, and
\[
 rh\le\frac{n}{\kappa},\qquad b-4\ge6h+2.
\]
The second inequality leaves enough unchanged control positions for $H_i$ in each block.

The borrowed data also fit inside the existing blocks. Let $q\le rh$ be the total number of stops in all walks. The dirty register $Z$, the words $a_i$, the inward buffers, and the bits $c_i,w_i$ need at most
\[
 (r+1)+2q+2r\le5rh\le\frac{5n}{\kappa}
   \le\frac n2\le n-r-1
\]
bits for sufficiently large $n$. The last quantity is the number of interior data bits. The two forests and their factors run sequentially, reusing the same restored allocation.

We can now finish the three steps of Figure~\ref{fig:block-framework}. Run each local ladder directly: their total size is $O(n)$ and their depth is $O(b)=O(\log n)$. The endpoint update uses sixteen calls to Lemma~\ref{lem:batched-path-update}, each of size $O(n+rh)=O(n)$ and depth $O(\log n)$. Finally, the direct interior correction of Figure~\ref{fig:interior-correction} uses $O(n)$ gates and $O(b)=O(\log n)$ depth. There is only one partition and no recursive call. Lemma~\ref{lem:reduction} gives the same bounds for addition, proving Theorem~\ref{thm:main}.

Every output bit in the second input register must participate in a gate, giving an $\Omega(n)$ size bound for bounded-arity gates. The highest sum bit depends on all $2n$ input bits. Looking backward through one layer, each gate can increase the number of inputs that affect this bit by at most a constant factor. The depth is therefore $\Omega(\log n)$. Our circuit attains both lower bounds, so its size and depth are asymptotically optimal.

\section*{Acknowledgments}
Wei Zi acknowledges support from the National Natural Science Foundation of China (Grant No. 62602563) and the Guangdong Provincial Quantum Science Strategic Initiative (Grant No. GDZX2503001).

\section*{AI statement}
We thank GPT-6 Pro for useful discussions that helped develop and verify the ideas in this paper. In particular, GPT-6 Pro found the forest decomposition that enabled us to achieve asymptotically optimal circuit size. We also used GPT-6.1 Sol (Codex) to assist with writing the manuscript and preparing LaTeX figures. We reviewed, edited, and verified all AI-assisted material and take full responsibility for the content of this paper.

\bibliographystyle{alpha}
\bibliography{ref}

\end{document}